\newtheorem{definition}{Definition}
\newtheorem{theorem}[definition]{Theorem}
\newtheorem{proposition}[definition]{Proposition}
\newtheorem{lemma}[definition]{Lemma}
\newcommand{\Dn}{\mathcal{D}^n}
\newcommand{\degree}[2]{\textnormal{deg}_{\textnormal{#1}}^{\textnormal{#2}}}
\newcommand{\len}{\textnormal{len}}
\newcommand{\absmass}{\textnormal{mass}^{\textnormal{abs}}}
\newcommand{\mass}{\textnormal{mass}}
\newcommand{\width}{\textnormal{width}}
\newcommand{\shape}[1]{\textnormal{sh}^{\textnormal{#1}}}
\newcommand{\erdos}{Erdős-Rényi\xspace}
\newcommand{\sameprob}{\texttt{sameprob}\xspace}
\newcommand{\samepred}{\texttt{samepred}\xspace}
\newcommand{\layrprob}{\texttt{layrprob}\xspace}
\newcommand{\layrpred}{\texttt{layrpred}\xspace}
\newcommand{\problem}{$P|p_j=1,prec|C_{\max}$\xspace}
\newcommand{\Expect}{\mathbb{E}}
\newcommand{\Prob}{\mathbb{P}}
\newcommand{\mynode}[3]{\node[circle,draw,thick] (#1) at (#2,#3){};}
\begin{document}

\title{A Comparison of Random Task Graph Generation Methods for
  Scheduling Problems}

%% \author{\IEEmail: \verb+{louis-claude.canon | mohamad.el_sayah | pierre-cyrille.heam@univ-fcomte.fr}+EEauthorblockN{
%% Louis-Claude Canon\IEEEauthorrefmark{1}\IEEEauthorrefmark{2},
%% Mohamad El Sayah\IEEEauthorrefmark{2},
%% Pierre-Cyrille Héam\IEEEauthorrefmark{2}
%% }
%% \IEEEauthorblockA{\IEEEauthorrefmark{1}LIP, \'Ecole Normale Sup\'erieure de Lyon, CNRS \& Inria, France}
%% \IEEEauthorblockA{\IEEEauthorrefmark{2}FEMTO-ST, Université de Bourgogne Franche-Comté, France}
%% Email: \url{{louis-claude.canon | mohamad.el_sayah | pierre-cyrille.heam@univ-fcomte.fr}}
%% }

\author[1]{Louis-Claude Canon}
\author[1]{Mohamad El Sayah}
\author[1]{Pierre-Cyrille Héam}
%\affil[1]{École Normale Supérieure de Lyon, CNRS \& Inria, France}
\affil[1]{FEMTO-ST Institute, CNRS, Univ. Bourgogne Franche-Comté, France}

\maketitle

\begin{abstract}
  How to generate instances with relevant properties and without bias
  remains an open problem of critical importance for a fair comparison
  of heuristics.
  In the context of scheduling with precedence constraints, the
  instance consists of a task graph that determines a partial order on
  task executions.
  To avoid selecting instances among a set populated mainly with
  trivial ones, we rely on properties that quantify the
  characteristics specific to difficult instances.
  Among numerous identified such properties, the \emph{mass} measures
  how much a task graph can be decomposed into smaller ones.
  This property, together with an in-depth analysis of existing random
  task graph generation methods, establishes the sub-exponential
  generic time complexity of the studied problem.
  Empirical observations on the impact of existing generation methods
  on scheduling heuristics concludes our study.
\end{abstract}

%%%%%%%%%%%%%%%%%%%%%%%%%%%%%%%%%%%%%%%%%%%%%%%%%%%%%%%%%%%%%%%%%%%%%%%%%%
\section{Introduction}

How to correctly evaluate the performance of computing systems has
been a central question since several decades\cite{jain1990art}.
Among the arsenal of available evaluation methods, relying on random
instances allows comparing strategies in a large variety of
situations.
However, random generation methods are prone to bias, which prevents a
fair empirical assessment.
It is thus crucial to provide guarantees on the random distribution of
generated instances by ensuring, for instance, a uniform selection of
any instance among all possible ones.
Yet, for some problems, such uniformly generation instances are easy
to solve and thus uninteresting.
For instance, in uniformly distributed random graphs, the probability
that the diameter is 2 tends exponentially to 1 as the size of the
graph tends to infinity\cite{DBLP:journals/jsyml/Fagin76}.
Studying the problem characteristics to constrain the uniform
generation on a category of instances is thus critical.

In the context of parallel systems, instances for numerous
multiprocessor scheduling problems contain the description of an
application to be executed on a platform\cite{leung2004handbook}.
This study focuses on scheduling problems requiring a Directed Acyclic
Graph (DAG) as part of the input.
Such a DAG represents a set of tasks to be executed in a specific
order given by precedence constraints: the execution of any task
cannot start before all its predecessors have completed their
executions.
Scheduling a DAG on a platform composed of multiple processors
consists in assigning each task to a processor and in determining a
start time for each task.
While this work studies the DAG structure for several scheduling
problems, it illustrates and analyzes existing generation methods in
light of a specific problem with unitary costs and no communication.
This simple yet difficult problem emphasizes the effect of the DAG
structure on the performance of scheduling heuristics.

Some pathological instances are straightforward to solve.
For instance, if the width (i.e.\ maximum number of tasks that may be
run in parallel) is lower than the number of processors, then the
problem can be solved in polynomial time.
To avoid such instances, multiple DAG properties are proposed and
analyzed.
In particular, the mass measures the degree to which an instance can
be decomposed into smaller independent sub-instances.
In the absence of communication, this property has an impact on
scheduling algorithms.
The purpose of this work is to identify such properties to determine
how the uniform generation of DAGs should be constrained and how
existing generation methods perform relatively to these properties.
As a major contribution of this work, we determine the generic time
complexity to be sub-exponential for uniform instances for a large
class of scheduling problems (i.e.\ those that can be decomposed into
smaller problems).

After exposing related works in Section~\ref{section:related},
Section~\ref{section:background} lists DAG properties and covers
scheduling and random generation concepts.
Section~\ref{sec:analys-spec-dag} motivates the focus on a selection
of properties by analyzing all the proposed DAG properties on a set of
special DAGs.
Section~\ref{sec:existinggeneration} provides an in-depth analysis of
existing random DAG generation methods supported by consistent
empirical observations.
Finally, Section~\ref{sec:eval-sched-algor} studies the impact of
these methods and the DAG properties on scheduling heuristics.
The algorithms are implemented in R and Python and the related code,
data and analysis are available in\cite{figshare}.

%%%%%%%%%%%%%%%%%%%%%%%%%%%%%%%%%%%%%%%%%%%%%%%%%%%%%%%%%%%%%%%%%%%%%%%%%%
\section{Related Work}\label{section:related}

%%%%%%%%%%%%%%%%%%%%%%%%%%%%%%%%%%%%%
\subsection{Analysis of Generation Methods}

Our approach is similar to the one followed in\cite{cordeiro2010a}
and\cite{martinez2018b}, which consists in studying the properties of
randomly generated DAGs before comparing the performance of scheduling
heuristics.
In\cite{cordeiro2010a}, three properties are measured and analyzed for
each studied generation method: the length of the longest path, the
distribution of the output degrees and the number of edges.
We describe 15 such properties in Table~\ref{tab:properties}.
They consider five random generation methods (described in this
section and Section~\ref{sec:existinggeneration}): two variants of
the \erdos algorithm, one layer-by-layer variant, the random orders
method and the Fan-in/Fan-out method.
Finally, for each generation method, the paper compares the
performance of four scheduling heuristics.
The results are consistent with the observations done in
Section~\ref{sec:existinggeneration}
(Figures~\ref{fig:erdos_proba},~\ref{fig:poset_perm}
and~\ref{fig:layer_layer}) for the length and the number of edges.
A similar approach is undertaken in\cite{martinez2018b}.
First, three characteristics are considered: the number of vertices in
the critical path, the width (or maximum parallelism) and the density
of the DAG in terms of edges.
These characteristics are studied on DAGs generated by two main
approaches (the \erdos algorithm and a MCMC approach) with sizes
between 5 and 30 vertices.
Finally, although no DAG property is studied, scheduling heuristics
are compared using a variety of random and non-random DAGs
in\cite{kwok1999a}.

We describe below generation tools, data sets and random generation
methods.

%%%%%%%%%%%%%%%%%%%%%%%%%%%%%%%%%%%%%
\subsection{Generation Tools}

Many tools have been proposed in the literature to generate DAGs in
the context of scheduling in parallel systems.
TGFF (Task Graphs For
Free)\footnote{\url{http://ziyang.eecs.umich.edu/projects/tgff/index.html}}
is the first tool proposed for this purpose\cite{dick1998a}.
This tool relies on a number of parameters related to the task graph
structure: maximum input and output degrees of vertices, average for
the minimum number of vertices, etc.
The task graph is constructed by creating a single-vertex graph and
then incrementally augmenting it.
This approach randomly alternates between two phases until the number
of vertices in the graph is greater than or equal to the minimum
number of vertices: the expansion of the graph and its contraction.
The main goal of TGFF is to gain more control over the input and
output degrees of the tasks.

DAGGEN\footnote{\url{https://github.com/frs69wq/daggen}} was later
proposed to compare heuristics for a specifc problem\cite{dutot2009a}.
This tool relies on a layer-by-layer approach with five parameters:
the number of vertices, a width and regularity parameters for the
layer sizes, and a density and jump parameters for the connectivity of
the DAG\@.
The number of elements per each layer is uniformly drawn in an
interval centered around an average value determined by the width
parameter and with a range determined by the regularity parameter.
Lastly, edges are added between layers separated by a maximum number
of layers determined by the jump parameter (edges only connect
consecutive layers when this parameter is one).
For each vertex, a uniform number of predecessors is added between one
and a maximum value determined by the density parameter.

GGen\footnote{\url{https://github.com/perarnau/ggen}} has been
proposed to unify the generation of DAGs by integrating existing
methods\cite{cordeiro2010a}.
The tool implements two variants of the \erdos algorithm, one
layer-by-layer variant, the random orders method and the
Fan-in/Fan-out method.
It also generates DAGs derived from classical parallel algorithms such
as the recursive Fibonacci function, the Strassen multiplication
algorithm, the Cholesky factorization, etc.

The Pegasus workflow
generator\footnote{\url{https://confluence.pegasus.isi.edu/display/pegasus/WorkflowGenerator}}
can be used to generate DAGs from several scientific
applications\cite{juve2013a} such as Montage, CyberShake, Broadband,
etc.
XL-STaGe\footnote{\url{https://github.com/nizarsd/xl-stage}} produces
layer-by-layer DAGs using a truncated normal distribution to
distribute the vertices to the layers\cite{campos2016a}.
This tool inserts edges with a probability that decreases as the
number of layers between two vertices increases.
A tool named
RandomWorkflowGenerator\footnote{\url{https://github.com/anubhavcho/RandomWorkflowGenerator}}
implements a layer-by-layer variant\cite{gupta2017a}.
Other tools have also been proposed but are no longer available as of
this writing: DAGEN\cite{amalarethinam2011a},
RTRG\footnote{\url{http://users.ecs.soton.ac.uk/ras1n09/rtrg/index.html}
  (unavailable as of this writing)}\cite{shafik2012a},
MRTG\cite{ashish2016a}.

Finally, other fields such as electronic circuit design or dataflow
also use DAGs.
In this last field, however, requirements differ: the acyclicity is no
longer relevant, while ensuring a strong connectedness is important.
Two noteworthy generators have been proposed SDF$^3$ inspired from
TGFF\footnote{\url{http://www.es.ele.tue.nl/sdf3/}}\cite{stuijk2006a}
and
Turbine\footnote{\url{https://github.com/bbodin/turbine}}\cite{bodin2014a}.

%%%%%%%%%%%%%%%%%%%%%%%%%%%%%%%%%%%%%
\subsection{Instance Sets}

The STG (Standard Task Graph)
set\footnote{\url{http://www.kasahara.elec.waseda.ac.jp/schedule/}}
has been specifically proposed for parallel systems\cite{tobita2002a}
and is frequently used to compare scheduling
heuristics\cite{aggarwal2005genetic,davidovic2012bee}.
The DAG structures of STG relies on four different methods.
Two methods, \sameprob and \samepred, rely on the \erdos algorithm,
while the other two, \layrprob and \layrpred, constitute
layer-by-layer variants.
A connection probability is given to \sameprob and \layrprob, while an
average number of predecessors is given to \samepred and \layrpred.
With these last two methods, the parameter is apparently converted to
a connection probability inferred from the size of the DAG\@.
Any layer-by-layer variant proceeds by first distributing vertices
into layers such that the average layer size is 10.
Then, edges between any pair of vertices from distinct layers are
added from top to bottom according to the connectivity parameter.
The size of the DAGs varies from 50 to \numprint{5000}.
For each size, the data set contains 15 instances for each combination
of a method among the four ones and a value for the connectivity
parameter among three possible ones (leading to 180 instances).
Both layer-by-layer variants do not guarantee that the layer of any
vertex equals its depth.
As a consequence, the length is not necessarily $\frac{n}{10}+2$ (2
dummy vertices are always added) where $n$ is the number of
vertices\footnote{This is the case for the instance
  \texttt{rand0038.stg} for size 50.}
and this problem becomes more apparent with large DAGs generated by
\layrpred because there are not enough inserted edges to ensure the
layered structure.
The STG set also contains costs and real DAGs such as robot control,
sparse matrix solver and SPEC fpppp program.

PSPLIB\footnote{\url{http://www.om-db.wi.tum.de/psplib/}} contains
difficult instances for RCPSP (Resource-Constrained Project Scheduling
Problems)\cite{kolisch1995a}, a scheduling problem in the field of
project management.
Finally, in the graph drawing context, a set of 112 real-life graphs
were proposed\footnote{\url{ftp://infokit.dis.uniromal.it/public/}
  (unavailable as of this writing)}\cite{battista1997a} but are no
longer available.

In addition to those implemented in GGen and the ones in STG, other
DAGs from real-cases can be used such as the LU
decomposition\cite{lord1983a}, the parallel Gaussian elimination
algorithm\cite{cosnard1988a}, the parallel Laplace equation
algorithm\cite{wu1990a}, the mean value analysis
(MVA)\cite{almeida1992a}, which has a diamond-like structure, the FFT
algorithm\cite{cormen2009a}, which has a butterfly structure, the
QR factorization, etc.

%%%%%%%%%%%%%%%%%%%%%%%%%%%%%%%%%%%%%
\subsection{Layer-by-Layer Methods}
\label{sec:layer-layer-method}

The layer-by-layer method was first proposed by\cite{adam1974a} but
popularized later by the introduction of the STG data
set\cite{tobita2002a}.
This method produces DAGs in which vertices are distributed in layers
and vertices belonging to the same layer are independent.
The method consists in three steps: determining the number of layers;
distributing the vertices to the layers; connecting the vertices from
different layers.
In most proposed methods, there is at least one parameter for each
step.
For instance, the shape parameter controls the number of layers and is
related to the ratio of $\sqrt{n}$ to the number of
layers\cite{topcuoglu2002a,ilavarasan2007a,gupta2017a}.

The number of layers can be drawn from a parameterized uniform
distribution\cite{adam1974a,topcuoglu2002a,ilavarasan2007a,saovapakhiran2011a},
given as a parameter\cite{cordeiro2010a,gupta2017a} or generated in a
non-parameteri\-zed way\cite{ahmad1998a,tobita2002a,campos2016a}.

Similarly, vertices can be distributed by generating a number of
vertices at each layer with a parameterized uniform
distribution\cite{adam1974a,topcuoglu2002a,ilavarasan2007a,dutot2009a,saovapakhiran2011a},
by selecting a layer for each vertex with a parameterized normal
distribution\cite{campos2016a}, by using a balls into bins
approach\cite{cordeiro2010a,gupta2017a} or in a non-parameterized
way\cite{ahmad1998a}.
Note that generating a uniform number of vertices per layer may lead
to a different number of vertices $n$ than expected.
Also, using a balls into bins strategy may lead to empty layers.

Finally, the connection between vertices can depend on a connection
probability\cite{tobita2002a,dutot2009a,cordeiro2010a,saovapakhiran2011a,campos2016a}
or an average number of predecessors or successors for each
vertex\cite{adam1974a,tobita2002a,topcuoglu2002a}.
Although vertices in the same layer may have different depth (e.g.\
this occurs in the STG data set), adding specific edges prevents this
situation\cite{dutot2009a,gupta2017a}.
The layer-by-layer approach can also lead to DAGs with multiple
connected components except for\cite{adam1974a}.
Finally, some methods allow edges between non-consecutive
layers\cite{adam1974a,tobita2002a,cordeiro2010a}, while others limit
them\cite{dutot2009a,campos2016a,gupta2017a}.

%%%%%%%%%%%%%%%%%%%%%%%%%%%%%%%%%%%%%
\subsection{Uniform Random Generation}

Many works address the problem of randomly generating DAGs
with a known distribution.
Uniform random generation of DAGs can be done using counting
approaches\cite{roblab} based on generating functions.
Many exisiting methods have been developped in the literature and the
most important ones are described in
Section~\ref{sec:existinggeneration}.

While previous uniform approaches consider only the size of the DAG
$n$ as a parameter, other studies have proposed to generate directed
graphs from a prescribed degree
sequence\cite{milo2003a,karrer2009a,ajwani2013a}.
A uniform method is proposed in\cite{milo2003a} but may produce cyclic
graphs.
In contrast, the method proposed in\cite{karrer2009a} forbids
cyclicity but has no uniformity guarantee.
Last, in the context of sensor streams, several methods has been
proposed\cite{ajwani2013a} to generate DAGs with a prescribed degree
distribution.

Finally, a multitude of related approaches has been proposed but are
discarded in this study because of their specificity.
For instance, specific structures may be used to assess the
performance of scheduling methods\cite{li2013a,canon2018a} or special
DAGs with known optimal solutions relatively to a given platform may
also be built\cite{kwok1999a,oppermann2018a}.

%%%%%%%%%%%%%%%%%%%%%%%%%%%%%%%%%%%%%%%%%%%%%%%%%%%%%%%%%%%%%%%%%%%%%%%%%%
\section{Background}
\label{section:background}

\subsection{Directed Acyclic Graphs}
\label{sec:direct-acycl-graphs}
All graphs considered throughout this paper are finite.  A
\emph{directed graph} is a pair $(V,E)$ where $V$ is a finite set of
\emph{vertices} and $E\subseteq V\times V$ is the set of \emph{edges}. A
\emph{path} is a finite sequence of consecutive edges, that is a
sequence of the form $(v_1,v_2),(v_2,v_3),\ldots,(v_{k-1},v_k)$; $k$
is the \emph{length} of the path, i.e.\ the number of vertices on this
path.

The \emph{output degree} of a vertex $v$ is the cardinal of the set
$\{(v,w)\mid w\in V,\ (v,w)\in E\}$. Similarly the \emph{input
  degree} of a vertex $v$ is the cardinal of the set $\{(w,v)\mid
w\in V,\ (w,v)\in E\}$.  The \emph{output} (resp.\ \emph{input}) \emph{degree} of a
directed graph is the maximum value of the output (resp.\ input) degrees
of its vertices.
The degree of a vertex is the sum of its input and output degrees.

A directed graph is \emph{acyclic} (DAG for short) if there is no path
of strictly positive length $k$ such that $v_1=v_k$ (with the above
notation).
Let $\Dn$ be the set of all DAGs whose set of vertices is
$\{1,2,\ldots,n\}$.
In a DAG, if
$(v,w)$ is an edge, $v$ is a \emph{predecessor} of $w$ and $w$ a
\emph{successor} of $v$.

In a DAG $D$ with $n$ vertices, all paths have a length less than or equal
to $n$. The \emph{length} of a DAG is defined as the maximum length of
a path in this DAG\@. The \emph{depth} of a vertex $v$ in a DAG is
inductively defined by: if $v$ has no predecessor, then its depth is
$1$; otherwise, the depth of $v$ is one plus the maximum depth of its
predecessors. The \emph{shape decomposition} of a DAG is the tuple
$(X_1,X_2,\ldots,X_k)$ where $X_i$ is the set of vertices of depth
$i$. Note that $k$ is the length of the DAG\@.
The \emph{shape} of the DAG is the tuple $(|X_1|,\ldots,|X_k|)$.
The maximum (resp.\ minimum) value of the $|X_i|$ is called the
\emph{maximum shape} (resp.\ \emph{minimum shape}) of the DAG\@.
Computing the shape decomposition and the shape of a DAG is easy.
If $|X_i|=1$, the unique vertex of $X_i$ is called a \emph{bottleneck
  vertex}.
%If $v\in X_i$ and $w\in X_{i+\ell}$ are bottleneck vertices such that
%for each $i< j<\ell$, $|X_j|\neq 1$, then $v$ and $w$ are
%\emph{consecutive} bottleneck vertices.
A \emph{block} is a subset of vertices of the form
$\cup_{i<j<i+\ell} X_j$ with $\ell>1$ where $X_i$ is either a
singleton or $i=0$, $X_{i+\ell}$ is either a singleton or
$i+\ell=k+1$, and for each $i< j<i+\ell$, $|X_j|\neq 1$.
We denote by $\absmass(B)$ the cardinal of $B=\cup_{i<j<i+\ell} X_j$
and by $\absmass(D)=\max\{\absmass(B)\mid B \text{ is a block}\}$ the
\emph{absolute mass} of $D$.
The \emph{relative mass}, or simply the \emph{mass}, is given by
$\mass(D)=\frac{\absmass(D)}{n}$.

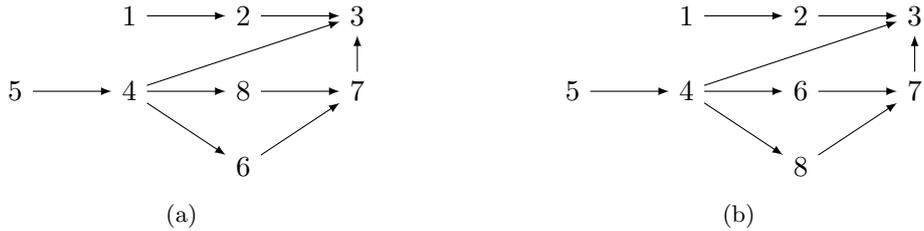
\begin{figure}
  \centering
    \subfigure[\label{fig:DAG:examplea}]{
      \begin{tikzpicture}
        \node(1) at (0,0) {1};
        \node(2) at (1.5,0) {2};
        \node(3) at (3,0) {3};
        \node(5) at (-1.5,-1) {5};
        \node(4) at (0,-1) {4};
        \node(8) at (1.5,-1) {8};
        \node(6) at (1.5,-2) {6};
        \node(7) at (3,-1) {7};
        \path[draw,->,>=latex] (1) -- (2);
        \path[draw,->,>=latex] (2) -- (3);
        \path[draw,->,>=latex] (5) -- (4);
        \path[draw,->,>=latex] (4) -- (8);
        \path[draw,->,>=latex] (4) -- (6);
        \path[draw,->,>=latex] (8) -- (7);
        \path[draw,->,>=latex] (6) -- (7);
        \path[->,draw,>=latex] (4) to (3);
        \path[->,draw,>=latex] (7) to (3);
      \end{tikzpicture}
    }\hspace{2cm}
    \subfigure[]{
      \begin{tikzpicture}
        \node(1) at (0,0) {1};
        \node(2) at (1.5,0) {2};
        \node(3) at (3,0) {3};
        \node(5) at (-1.5,-1) {5};
        \node(4) at (0,-1) {4};
        \node(8) at (1.5,-1) {6};
        \node(6) at (1.5,-2) {8};
        \node(7) at (3,-1) {7};
        \path[draw,->,>=latex] (1) -- (2);
        \path[draw,->,>=latex] (2) -- (3);
        \path[draw,->,>=latex] (5) -- (4);
        \path[draw,->,>=latex] (4) -- (8);
        \path[draw,->,>=latex] (4) -- (6);
        \path[draw,->,>=latex] (8) -- (7);
        \path[draw,->,>=latex] (6) -- (7);
        \path[->,draw,>=latex] (4) to (3);
        \path[->,draw,>=latex] (7) to (3);
      \end{tikzpicture}
    }
  \caption{Examples of DAGs.}\label{fig:DAG:example}
\end{figure}

For example, the DAG on Fig.~\ref{fig:DAG:examplea} has for shape
decomposition the tuple
$(\{1,5\},\{2,4\},\allowbreak\{6,8\},\{7\},\{3\})$ and for shape
the tuple $(2,2,2,1,1)$. A longest path is
$(5,4),(4,6),(6,7),(7,3)$. It has two bottleneck vertices $7$ and
$3$. Its absolute mass is $2+2+2+1=7$.

In a DAG, two distinct vertices $v$ and $w$ are \emph{incomparable} if
there is neither a path from $v$ to $w$, nor from $w$ to $v$. The
\emph{width} of a graph is the maximum size of the subset of vertices
whose elements are pair-wise incomparable. Since vertices of same
depth are incomparable, the maximum shape of a DAG is less than or equal to
its width.
The width is also the size of the largest antichain, which can be
computed in polynomial time using Dilworth's theorem and a technique
developed by Ford and Fulkerson\cite{ford2015flows}.
The methodology is conjectured to have a time complexity of
$O(n^{5/2})$\cite{plotnikov2007experimental}.
In some cases (for instance the \emph{comb} DAG, see
Section~\ref{sec:analys-spec-dag}),
the width can be much larger than the maximum shape.
Table~\ref{tab:comparison} compares the width and the maximum shape on
the DAGs obtained with two random generators explored in
this paper.

\begin{table}
  \centering
  \begin{tabular}{ccc}
    \toprule
    $n$ & \erdos & Uniform\\
    \midrule
    10 & 2.95 -- 0.34 -- 2 & 2.35 -- 0.09 -- 1\\
    20 & 3.52 -- 0.45 -- 2 & 2.77 -- 0.14 -- 1\\
    30 & 3.62 -- 0.46 -- 2 & 3.13 -- 0.23 -- 1\\
    \bottomrule
  \end{tabular}
  \caption{\label{tab:comparison}Comparison of width and maximum shape
    of randomly generated DAGs with different methods: ``\erdos'' for the
    so-called algorithm with parameter $p=0.5$ (see
    Section~\ref{sec:rand-gener-triang}) and ``Uniform'' for the
    recursive random generator (see Section~\ref{sec:recursive}).
    Reported numbers $x - y - z$ correspond respectively to the
    average width, the average difference between width and shape
    width, and the maximum difference pointed out.
    Each experiment is performed by sampling 100 DAGs.
  }
\end{table}

Two DAGs $(V_1,E_1)$ and $(V_2,E_2)$ are isomorphic, denoted
$(V_1,E_1)\sim(V_2,E_2)$, if there exists a bijective map $\varphi$
from $V_1$ to $V_2$ such that $(x,y)\in E_1$ iff
$(\varphi(x),\varphi(y))\in E_2$. The relation $\sim$ is an
equivalence relation.
%The quotient set $\Doin/\sim$ is denoted $\oDoin$.
Intuitively, two DAGs are isomorphic if they are equal up to
vertices names.  For example, the DAGs on Fig.~\ref{fig:DAG:example} are
isomorphic.

The \emph{transitive reduction} of a DAG $D$\cite{aho1972transitive}
is the DAG $D^T$ for which: $D^T$ has a directed path between $u$ and
$v$ iff $D$ has a directed path between $u$ and $v$; there is no graph
with fewer edges than $D^T$ that satisfies the previous property.
Intuitively, this operation consists in removing redundant edges.
The \emph{reversal} of a DAG $D$ is the DAG $D^R$ for which there is
an edge between $u$ and $v$ iff there is an edge between $v$ and $u$
in $D$.
Intuitively, this operation consists in reversing the DAG\@.

Finally, Table~\ref{tab:properties} presents some of the DAG
properties that may impact the performance of scheduling algorithms.
We discard the minimum input and output degrees because they are
always $\degree{in}{min}=\degree{out}{min}=0$.
We also discard the mean input and output degrees because they are
always equal to half the mean degree
($\degree{in}{mean}=\degree{out}{mean}=\frac{\degree{}{mean}}{2}$).
For all nine edge-related properties ($m$ and the degree-based
properties) applied to a DAG $D$, we can also compute them on the
transitive reduction $D^T$.
The vertex-related properties ($n$, the width and the shape-based
ones) remain the same on the transitive reduction.
For all seven shape-based properties on a DAG $D$, we can also compute
them on the reversal $D^R$.
The edge-related properties remain the same through the reversal with
the inversion of $\degree{in}{max}$ and $\degree{in}{sd}$
with $\degree{out}{max}$ and $\degree{out}{sd}$,
respectively.
Finally, some of these properties are related:
$n\times\degree{}{mean}=\frac{m}{2}$ and $\len\times\shape{mean}=n$.

\begin{table}[ht]
  \centering
  \begin{tabular}{rm{0.6\columnwidth}}
    \toprule
    Symbol & Definition\\
    \midrule
    $n$ & number of vertices\\
    $m$ & number of edges\\
    $\degree{}{max} (\degree{in}{max},\degree{out}{max}$) & maximum (input, output) degree\\
    $\degree{}{min}$ & minimum degree\\
    $\degree{}{mean}$ & mean (input, output) degree\\
    $\degree{}{sd} (\degree{in}{sd},\degree{out}{sd})$ & standard deviation of the (input, output) degrees\\
    $\len$ or $k$ & length (also called height, number of levels, longest
                    path or critical path length)\\
    $\width$ & width\\
    $\shape{max}$ & maximum shape\\
    $\shape{min}$ & minimum shape\\
    $\shape{mean}$ & mean shape (parallelism in\cite{tobita2002a})\\
    $\shape{sd}$ & standard deviation of the shape\\
    $\shape{1}$ & number of source vertices (vertices with null input degree)\\
    $\shape{k}$ & last element of the shape\\
    $\mass$ & (relative) mass\\
    \midrule
    $p$ & connectivity probability\\
    $K$ & number of permutations (for the random orders method in Section~\ref{sec:random-orders})\\
    $P$ & set of processors\\
    \bottomrule
  \end{tabular}
  \caption{\label{tab:properties}List of DAG properties and other
    notations.
    When necessary, we specify on which DAG a property is measured
    (e.g.\ $m(D^T)$ for the number of edges in the transitive
    reduction of $D$).
  }
\end{table}

%%%%%%%%%%%%%
\subsection{Scheduling}
\label{sec:scheduling}

We consider a classic problem in parallel systems noted \problem in
Graham's notation\cite{graham79a}.
The objective consists in scheduling a set of tasks on homogeneous
processors such as to minimize the overall completion time.
The dependencies between tasks are represented by a precedence DAG
$(V,E)$ where $|V|=n$ is the number of tasks and $|E|=m$ the number of
edges.
Before starting the execution of a task, all its predecessors must
complete their executions.
The execution cost $p_j$ of task $j$ on any processor is unitary and
there is no cost on the edges (i.e.\ no communication).
A schedule defines on which processor and at which date each task
starts its execution such that no processor executes more than one
task at any time and all precedence constraints are met.
The problem consists in finding the schedule with the minimum
makespan, i.e.\ overall completion time before the first task starting
its execution and the last one completing its execution.

A possible schedule for the DAG of Figure~\ref{fig:DAG:examplea} on
two processors $P_1$ and $P_2$, assuming costs are unitary, consists
in starting executing tasks~1 and~2 on processor $P_1$ as soon as
possible (i.e.\ at times~0 and~1), while processor $P_2$ processes
tasks~5, 4, 8, 7 and 3 similarly.
The execution of task~6 follows the termination of task~2 on
processor $P_1$ to satisfy the precedence constraint of task~7.
The makespan of this schedule is 5.

This problem is strongly
NP-hard\cite{Ullman:75:NP-complete-scheduling}, while it is polynomial
when there are no precedence constraints ($P|p_j=1|C_{\max}$), which
means the difficulty comes from the dependencies.
Many polynomial heuristics have been proposed for this problem (see
Section~\ref{sec:eval-sched-algor}).
With specific instances, such heuristics may be optimal.
This is the case when the width does not exceed the number of
processors, which leads to a potentially large length.
Any task can thus start its execution as soon as it becomes available.
The problem is also polynomial when edges only belong to the critical
path (i.e.\ $m=\len-1$ and the width equal $n-\len+1$, which is large
when the length is small).
In this case, any heuristic prioritizing critical tasks and scheduling
all other tasks as soon as possible will be optimal.
This paper explores how DAG properties are impacted by the generation
method with the objective to control them to avoid easy instances.

Although this paper studies random DAGs with heuristics for the
specific problem \problem, generated DAGs can be used for any
scheduling problem with precedence constraints.
While avoiding specific instances depending on their width and length
is relevant for many scheduling problems, it is not necessary the case
for all of them.
For instance, with non-unitary processing costs, instances with large
width and small length are difficult because the problem is strongly
NP-Hard even in the absence of precedence constraints
($P||C_{\max}$)\cite{GareyJohnson:78:Strong-NP-completeness}.

%%%%%%%%%%%%%%
\subsection{Mass and Scheduling}
\label{sec:mass-scheduling}

The proposed mass measure has a direct implication in this scheduling
context.
Consider a DAG $D=(V,E)$ whose minimum shape is $1$; there exists a
bottleneck vertex $v$ such that the shape of the DAG is of the form
$(X_1,\ldots,X_{\ell},\{v\},X_{\ell+1},\ldots,X_k)$. The scheduling
problem for $D$ can be decomposed into two subproblems, one for
the sub-DAG of $D$ whose set of vertices is $\{v\}\cup\bigcup_{i\leq
\ell}X_i$ and one for the sub-DAG of $D$ whose set of vertices is
$\{v\}\cup\bigcup_{i>
  \ell} X_i$. Using recursively this decomposition, the initial problem
can be decomposed into $n_c+1$ independent scheduling problems, where
$n_c$ is the number of  bottleneck
vertices.

Applying a brute force algorithm for the scheduling problems
computes the optimal results in a time $T\leq n_c T_m$, where $T_m$ is
the maximum time required to solve the problem on a DAG with $\absmass(D)$
vertices. Since exponential brute force exact approaches exist, it
follows that if $\absmass(D)=O(\log^k n)$ for a constant $k$, then an
optimal solution of the
scheduling problem can be computed in sub-exponential time. Consequently,
scheduling heuristics are irrelevant for task graph with logarithmic
absolute mass. %% Note that under exponential time hypothesis, $\langle Pm||C_{\rm
%%   max}\rangle$ can note be solved in time better than
%% $2^{o({\sqrt{n}})}$\cite{DBLP:journals/siamdm/JansenLL16}.
Similarly, the same arguments work to claim that interesting instances
for the scheduling problem must have quite a large absolute mass (not
in $o(n)$).
It is therefore preferable to have instances with no or few bottleneck
vertices, that is a unitary mass.

The relevance of the mass property is limited to a specific class of
scheduling problems that contains all problems for which the instance
can be cut into independent instances.
While the mass is still relevant with non-unitary processing costs, it
is no longer the case when there are communication costs.

%%%%%%%%%%%%%
\subsection{Uniformity of the Random Generation}
\label{sec:unif-rand-gener}

\begin{table}[ht]
\centering
\begin{tabular}{cccc}
  \toprule
  Isom.\ classes & Matrices & ER & Labeling\\

  \midrule
  \begin{tikzpicture}
    \mynode{A}{0}{0};
    \mynode{B}{1}{0};
    \mynode{C}{2}{0};
  \end{tikzpicture}
  &
  $\left(\begin{matrix}0 & 0 \\ & 0\end{matrix}\right)$
    & $\frac{1}{8}$ & 1\\

  \begin{tikzpicture}
    \mynode{A}{0}{0};
    \mynode{B}{1}{0};
    \mynode{C}{2}{0};
    \path[->,thick,draw] (A) -- (B);
  \end{tikzpicture}
  &
  $\left(\begin{matrix}1 & 0 \\ & 0\end{matrix}\right)$
    $\left(\begin{matrix}0 & 1 \\ & 0\end{matrix}\right)$
      $\left(\begin{matrix}0 & 0 \\ & 1\end{matrix}\right)$
    & $\frac{3}{8}$ & 6\\

  \begin{tikzpicture}
    \mynode{A}{0}{0};
    \mynode{B}{1}{0};
    \mynode{C}{2}{0};
    \path[->,thick,draw] (B) -- (C);
    \path[->,thick,draw] (A) to [bend left] (C);
  \end{tikzpicture}
  &
  $\left(\begin{matrix}0 & 1 \\ & 1\end{matrix}\right)$
    & $\frac{1}{8}$ & 3\\

  \begin{tikzpicture}
    \mynode{A}{0}{0};
    \mynode{B}{1}{0};
    \mynode{C}{2}{0};
    \path[->,thick,draw] (A) -- (B);
    \path[->,thick,draw] (A) to [bend left] (C);
  \end{tikzpicture}
  &
  $\left(\begin{matrix}1 & 1 \\ & 0\end{matrix}\right)$
    & $\frac{1}{8}$ & 3\\

  \begin{tikzpicture}
    \mynode{A}{0}{0};
    \mynode{B}{1}{0};
    \mynode{C}{2}{0};
    \path[->,thick,draw] (A) -- (B);
    \path[->,thick,draw] (B) -- (C);
  \end{tikzpicture}
  &
  $\left(\begin{matrix}1 & 0 \\ & 1\end{matrix}\right)$
    & $\frac{1}{8}$ & 6\\

  \begin{tikzpicture}
    \mynode{A}{0}{0};
    \mynode{B}{1}{0};
    \mynode{C}{2}{0};
    \path[->,thick,draw] (B) -- (C);
    \path[->,thick,draw] (A) -- (B);
    \path[->,thick,draw] (A) to [bend left] (C);
  \end{tikzpicture}
  &
  $\left(\begin{matrix}1 & 1 \\ & 1\end{matrix}\right)$
    & $\frac{1}{8}$ & 6\\
  \bottomrule
\end{tabular}
\caption{\label{tab:dag3}DAGs with 3 vertices: there is one row for
  each isomorphism class.
  For each class, we report: all corresponding (upper triangular)
  adjacency matrices; the probability of generating such a DAG with
  the \erdos algorithm ($p=0.5$); and, the number of DAGs in each
  isomorphism class (i.e.\ the number of labelings).
}
\end{table}

This work focuses on the importance generating DAGs uniformly.
We discuss the notion of uniformity through the example with 3
vertices given in Table~\ref{tab:dag3}.
In this instance, there are six isomorphism classes (i.e.\ six
different unlabeled DAGs) for a total of 25 different (labeled) DAGs.
A generator is thus uniform up to isomorphism if it generates each
isomorphism class (or unlabelled DAGs) with a probability $\frac1{6}$
or uniform on all (labelled) DAGs if it generates each DAG with a
probability $\frac1{25}$.
We also say that we generate non-isomorphic DAGs in the former case.
Finally, when considering only transitive reductions, we discard the
complete DAG\@.
The probability to generate each of the remaining isomorphism classes
(resp.\ labeled DAGs) with a uniform generator becomes $\frac1{5}$
(resp.\ $\frac1{19}$).
This leads to four different uniformity definitions.

%%%%%%%%%%%%%%%%%%%%%%%%%%%%%%%%%%%%%%%%%%%%%%%%%%%%%%%%%%%%%%%%%%%%%%%%%%
\section{Analysis of special DAGs}
\label{sec:analys-spec-dag}

\begin{table}
  \centering
  %\scalebox{0.9}{%
  \begin{tabular}{m{4cm}m{6cm}>{\centering\arraybackslash}m{4cm}}
    \toprule
    Name & description & representation\\
    \midrule
    Empty ($D_{\textnormal{empty}}$) & no edge &
    \begin{tikzpicture}
      \mynode{A}{0}{0};
      \mynode{B}{1}{0};
      \mynode{C}{2}{0};
      \mynode{D}{3}{0};
    \end{tikzpicture}\\

    Complete ($D_{\textnormal{complete}}$) & maximum number of edges &
    \begin{tikzpicture}
      \mynode{A}{0}{0};
      \mynode{B}{1}{0};
      \mynode{C}{2}{0};
      \mynode{D}{3}{0};
      \path[->,thick,draw] (A) -- (B);
      \path[->,thick,draw] (B) -- (C);
      \path[->,thick,draw] (C) -- (D);
      \path[->,thick,draw] (B) to[bend left] (D);
      \path[->,thick,draw] (A) to[bend right] (C);
      \path[->,thick,draw] (A) to[bend right] (D);
    \end{tikzpicture}\\

    Chain ($D_{\textnormal{chain}}$) & transitive reduction of the
    complete DAG &
    \begin{tikzpicture}
      \mynode{A}{0}{0};
      \mynode{B}{1}{0};
      \mynode{C}{2}{0};
      \mynode{D}{3}{0};
      \path[->,thick,draw] (A) -- (B);
      \path[->,thick,draw] (B) -- (C);
      \path[->,thick,draw] (C) -- (D);
    \end{tikzpicture}\\

    Complete binary tree ($D_{\textnormal{out-tree}}$) & each
    non-leaf/non-root vertex has a unique predecessor and two
    successors &
    \begin{tikzpicture}
      \mynode{A}{0}{0};
      \mynode{B}{1}{0.5};
      \mynode{C}{1}{-0.5};
      \mynode{D}{2}{0.25};
      \mynode{E}{2}{0.75};
      \mynode{F}{2}{-0.25};
      \mynode{G}{2}{-.75};
      \path[->,thick,draw] (A) -- (B);
      \path[->,thick,draw] (A) -- (C);
      \path[->,thick,draw] (B) -- (D);
      \path[->,thick,draw] (B) -- (E);
      \path[->,thick,draw] (C) -- (F);
      \path[->,thick,draw] (C) -- (G);
    \end{tikzpicture}\\

    %Reversed complete binary tree & reversed of the complete binary tree &
    %\begin{tikzpicture}
    %  \mynode{A}{0}{0};
    %  \mynode{B}{1}{0.5};
    %  \mynode{C}{1}{-0.5};
    %  \mynode{D}{2}{0.25};
    %  \mynode{E}{2}{0.75};
    %  \mynode{F}{2}{-0.25};
    %  \mynode{G}{2}{-.75};
    %  \path[->,thick,draw] (B) -- (A);
    %  \path[->,thick,draw] (C) -- (A);
    %  \path[->,thick,draw] (D) -- (B);
    %  \path[->,thick,draw] (E) -- (B);
    %  \path[->,thick,draw] (F) -- (C);
    %  \path[->,thick,draw] (G) -- (C);
    %\end{tikzpicture}\\

    Comb ($D_{\textnormal{comb}}$) & a chain where each non-leaf
    vertex has an additional leaf successor &
    \begin{tikzpicture}
      \mynode{A}{0}{-1};
      \mynode{B}{1}{-1};
      \mynode{C}{2}{-1};
      \mynode{D}{3}{-1};
      \mynode{E}{1}{-0.5};
      \mynode{F}{2}{-0.5};
      \mynode{G}{3}{-0.5};
      \path[->,thick,draw] (A) -- (B);
      \path[->,thick,draw] (B) -- (C);
      \path[->,thick,draw] (C) -- (D);
      \path[->,thick,draw] (A) -- (E);
      \path[->,thick,draw] (B) -- (F);
      \path[->,thick,draw] (C) -- (G);
    \end{tikzpicture}\\

    Complete bipartite ($D_{\textnormal{bipartite}}$) & $\frac{n}{2}$ vertices
    connected to $\frac{n}{2}$ vertices &
    \begin{tikzpicture}
      \mynode{A}{0}{0};
      \mynode{B}{1}{0};
      \mynode{C}{2}{0};
      \mynode{D}{0}{-1};
      \mynode{E}{1}{-1};
      \mynode{F}{2}{-1};
      \path[->,thick,draw] (A) -- (D);
      \path[->,thick,draw] (A) -- (E);
      \path[->,thick,draw] (A) -- (F);
      \path[->,thick,draw] (B) -- (D);
      \path[->,thick,draw] (B) -- (E);
      \path[->,thick,draw] (B) -- (F);
      \path[->,thick,draw] (C) -- (D);
      \path[->,thick,draw] (C) -- (E);
      \path[->,thick,draw] (C) -- (F);
    \end{tikzpicture}\\

    Complete layer-by-layer square ($D_{\textnormal{square}}$) &
    similar to the complete bipartite with $\sqrt{n}$ layers of size
    $\sqrt{n}$ &
    \begin{tikzpicture}
      \mynode{A}{0}{0.6};
      \mynode{B}{0}{0};
      \mynode{C}{0}{-0.6};
      \mynode{D}{1}{0.6};
      \mynode{E}{1}{0};
      \mynode{F}{1}{-0.6};
      \mynode{G}{2}{0.6};
      \mynode{H}{2}{0};
      \mynode{I}{2}{-0.6};
      \path[->,thick,draw] (A) -- (D);
      \path[->,thick,draw] (A) -- (E);
      \path[->,thick,draw] (A) -- (F);
      \path[->,thick,draw] (B) -- (D);
      \path[->,thick,draw] (B) -- (E);
      \path[->,thick,draw] (B) -- (F);
      \path[->,thick,draw] (C) -- (D);
      \path[->,thick,draw] (C) -- (E);
      \path[->,thick,draw] (C) -- (F);
      \path[->,thick,draw] (D) -- (G);
      \path[->,thick,draw] (D) -- (H);
      \path[->,thick,draw] (D) -- (I);
      \path[->,thick,draw] (E) -- (G);
      \path[->,thick,draw] (E) -- (H);
      \path[->,thick,draw] (E) -- (I);
      \path[->,thick,draw] (F) -- (G);
      \path[->,thick,draw] (F) -- (H);
      \path[->,thick,draw] (F) -- (I);
    \end{tikzpicture}\\

    Complete layer-by-layer triangular ($D_{\textnormal{triangular}}$) &
    similar to the complete layer-by-layer square but the size of each
    new layer increases by~1 &
    \begin{tikzpicture}
      \mynode{A}{0}{0};
      \mynode{B}{1}{0};
      \mynode{C}{1}{0.6};
      \mynode{D}{2}{0.6};
      \mynode{E}{2}{1.2};
      \mynode{F}{2}{0};
      \path[->,thick,draw] (A) -- (B);
      \path[->,thick,draw] (A) -- (C);
      \path[->,thick,draw] (B) -- (F);
      \path[->,thick,draw] (B) -- (D);
      \path[->,thick,draw] (B) -- (E);
      \path[->,thick,draw] (C) -- (F);
      \path[->,thick,draw] (C) -- (D);
      \path[->,thick,draw] (C) -- (E);
    \end{tikzpicture}\\

    \bottomrule
  \end{tabular}
  %}
  \caption{\label{tab.DAG}Special DAGs.
    The number of vertices $n$ is assumed to be a power of two minus
    one for the tree, odd for the comb, even for the bipartite, a square
    for the square and a triangular number for the triangular (one of
    the form $1+2+3+\cdots+k$).}
\end{table}

To analyse the properties described in the previous section, we
introduce in Table~\ref{tab.DAG} a collection of special DAGs.
The first three DAGs ($D_{\textnormal{empty}}$,
$D_{\textnormal{complete}}$ and $D_{\textnormal{chain}}$) constitutes
extreme cases in terms of precedence.
The next two DAGs ($D_{\textnormal{out-tree}}$ and
$D_{\textnormal{comb}}$), to which we can add the reversal of the
complete binary tree
($D_{\textnormal{in-tree}}=D_{\textnormal{out-tree}}^R$), are examples
of binary tree DAGs.
The last three DAGs ($D_{\textnormal{bipartite}}$,
$D_{\textnormal{square}}$ and $D_{\textnormal{triangular}}$) are
denser with more edges and with a compromise between the length and
the width for the last two DAGs.

\begin{table}[ht]
  \centering
  \begin{tabular}{m{0.08\columnwidth}ccccccccc}
    \toprule
    DAG & $m$ & $\degree{}{max}$ & $\degree{in}{max}$ & $\degree{out}{max}$
    & $\degree{}{min}$ & $\degree{}{mean}$ & $\degree{}{sd}$ & $\degree{in}{sd}$
    & $\degree{out}{sd}$\\
    \midrule
    $D_{\textnormal{empty}}$ & 0 & 0 & 0 & 0 & 0 & 0 & 0 & 0 & 0\\
    $D_{\textnormal{complete}}$ & $\frac{n^2}{2}$
    & $n$ & $n$ & $n$ & $n$ & $n$ & 0 & $\frac{n}{\sqrt{12}}$ & $\frac{n}{\sqrt{12}}$\\
    $D_{\textnormal{chain}}$ & $n$ & $2$ & $1$ & $1$ & $1$ & $2$
    & $\sqrt{\frac{2}{n}}$ & $\frac{1}{\sqrt{n}}$ & $\frac{1}{\sqrt{n}}$\\
    $D_{\textnormal{out-tree}}$ $D_{\textnormal{comb}}$ & $n$ & $3$
    & $1$ & $2$ & $1$ & $2$ & $1$
    & $\frac{1}{\sqrt{n}}$ & $1$\\
    $D_{\textnormal{in-tree}}$ $D_{\textnormal{comb}}^R$ & $n$ & $3$ & $2$
    & $1$ & $1$ & $2$ & $1$
    & $1$ & $\frac{1}{\sqrt{n}}$\\
    $D_{\textnormal{bipartite}}$ & $\frac{n^2}{4}$ & $\frac{n}{2}$ & $\frac{n}{2}$
    & $\frac{n}{2}$ & $\frac{n}{2}$ & $\frac{n}{2}$ & 0 & $\frac{n}{4}$ & $\frac{n}{4}$\\
    $D_{\textnormal{square}}$ & $n\sqrt{n}$ & $2\sqrt{n}$ & $\sqrt{n}$
    & $\sqrt{n}$ & $\sqrt{n}$ & $2\sqrt{n}$ & $\sqrt{2\sqrt{n}}$
    & $\sqrt{\sqrt{n}}$ & $\sqrt{\sqrt{n}}$\\
    $D_{\textnormal{triangular}}$ & $\frac{2n\sqrt{2n}}{3}$ & $2\sqrt{2n}$ & $\sqrt{2n}$
    & $\sqrt{2n}$ & 2 & $\frac{4}{3}\sqrt{2n}$ & $\frac{2}{3}\sqrt{n}$
    & $\frac{\sqrt{n}}{3}$ & $\frac{\sqrt{n}}{3}$\\
    \bottomrule
  \end{tabular}
  \begin{tabular}{m{0.08\columnwidth}ccccccccc}
    \toprule
    DAG & $\len$ & $\width$ & $\shape{max}$
    & $\shape{min}$ & $\shape{mean}$ & $\shape{sd}$ & $\shape{1}$ & $\shape{k}$ & $\mass$\\
    \midrule
    $D_{\textnormal{empty}}$ & 1 & $n$ & $n$ & $n$ & $n$ & 0 & $n$ & $n$ & 1\\
    $D_{\textnormal{complete}}$ $D_{\textnormal{chain}}$ & $n$ & 1 & 1 & 1 & 1 & 0 & 1 & 1 & 0\\
    $D_{\textnormal{out-tree}}$ & $\log_2(n)$ & $\frac{n}{2}$ & $\frac{n}{2}$ & 1 & $\frac{n}{\log_2(n)}$ & $\frac{n}{\sqrt{3\log_2(n)}}$ & 1 & $\frac{n}{2}$ & 1\\
    $D_{\textnormal{in-tree}}$ & $\log_2(n)$ & $\frac{n}{2}$ & $\frac{n}{2}$ & 1 & $\frac{n}{\log_2(n)}$ & $\frac{n}{\sqrt{3\log_2(n)}}$ & $\frac{n}{2}$ & 1 & 1\\
    $D_{\textnormal{comb}}$ & $\frac{n}{2}$ & $\frac{n}{2}$ & 2 & 1 & $2$ & $\sqrt{\frac{2}{n}}$ & 1 & 2 & 1\\
    $D_{\textnormal{comb}}^R$ & $\frac{n}{2}$ & $\frac{n}{2}$ & $\frac{n}{2}$ & 1 & $2$& $\sqrt{\frac{n}{2}}$ & $\frac{n}{2}$ & 1 & $\frac1{2}$\\
    $D_{\textnormal{bipartite}}$ & 2 & $\frac{n}{2}$ & $\frac{n}{2}$ & $\frac{n}{2}$ & $\frac{n}{2}$ & 0 & $\frac{n}{2}$ & $\frac{n}{2}$ & 1\\
    $D_{\textnormal{square}}$ & $\sqrt{n}$ & $\sqrt{n}$ & $\sqrt{n}$ & $\sqrt{n}$ & $\sqrt{n}$ & 0 & $\sqrt{n}$ & $\sqrt{n}$ & 1\\
    $D_{\textnormal{triangular}}$ & $\sqrt{2n}$ & $\sqrt{2n}$ & $\sqrt{2n}$ & 1 & $\sqrt{\frac{n}{2}}$ & $\sqrt{\frac{n}{6}}$ & 1 & $\sqrt{2n}$ & 1\\
    \bottomrule
  \end{tabular}
  \caption{\label{tab:prop}Approximate properties of special DAGs
    (negligible terms are discarded for clarity).
    More specifically, each approximate property
    $\textnormal{approx}(n)$ is related to the exact one
    $\textnormal{exact}(n)$ such that
    $\lim_{n\to\infty}\frac{\textnormal{approx}(n)}{\textnormal{exact}(n)}=1$.
    The exact properties are given in
    Appendix~\ref{sec:exact-prop-spec}.}
\end{table}

Table~\ref{tab:prop} illustrates the properties for these special
DAGs.
To discuss them, we analyze the most extreme values for each property.
They are reached with the empty and complete DAGs except for the
maximum standard deviations.
The maximum value for the shape standard deviation is $\frac{n-1}{2}$
(reached with an empty DAG to which a single edge is added).
When considering only transitive reductions (i.e.\ when discarding the
complete DAG), the maximum value for the maximum degrees remains $n$
with either a fork (a single source vertex is the predecessor of all
other vertices) or a join (the reversed fork).
Proposition~\ref{prop:bipartite} states that the maximum number of
edges among all transitive reductions is
$\left\lfloor {\frac{n^2}{4}} \right\rfloor$ (reached with the
bipartite DAG).
As a corollary, the maximum value for the minimum and mean degrees is
$\frac{n}{2}$.
Studying the maximum achievable values for the degree standard
deviations is left to future work.

\begin{proposition}
  \label{prop:bipartite}
  The maximum number of edges among all transitive reductions
  of size $n$ is $\left\lfloor \frac{n^2}{4} \right\rfloor$.
\end{proposition}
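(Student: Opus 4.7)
The plan is to reduce the question to Mantel's theorem on triangle-free graphs. First I would argue that if $D$ is a DAG equal to its own transitive reduction, then the underlying undirected simple graph of $D$ is triangle-free. Suppose, for contradiction, that three vertices $a,b,c$ form an undirected triangle in $D$. Since $D$ is acyclic, we may relabel so that a topological ordering sends $a,b,c$ to positions $1,2,3$; the three edges of the triangle must then all be forward, namely $(a,b)$, $(b,c)$, and $(a,c)$. But then the path $(a,b),(b,c)$ already witnesses reachability of $c$ from $a$, so the edge $(a,c)$ is redundant and would be removed by the transitive reduction, contradicting the assumption that $D$ equals its own transitive reduction.

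Next, since a DAG has at most one directed edge between any ordered pair of vertices and no anti-parallel pair (that would form a 2-cycle), the number of edges in $D$ equals the number of edges in its underlying undirected simple graph. Applying Mantel's theorem to this triangle-free underlying graph yields the upper bound $\lfloor n^2/4\rfloor$.

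For the matching lower bound, I would exhibit $D_{\textnormal{bipartite}}$ with parts of sizes $\lfloor n/2\rfloor$ and $\lceil n/2\rceil$ and all edges directed from the first part to the second. This DAG has exactly $\lfloor n/2\rfloor\cdot\lceil n/2\rceil=\lfloor n^2/4\rfloor$ edges, and it is its own transitive reduction because every directed path has length at most one, so no edge can be redundant.

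There is no real obstacle: the only substantive step is the triangle-freeness observation, and the rest is a direct invocation of Mantel's theorem together with the explicit bipartite construction already introduced in Table~\ref{tab.DAG}. The proof therefore amounts to recording these three short observations in sequence.
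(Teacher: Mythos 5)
Your proposal is correct and follows essentially the same route as the paper: triangle-freeness of a transitive reduction, Mantel's theorem for the upper bound, and the complete bipartite DAG for the matching lower bound. You simply spell out the triangle-freeness step (via a topological ordering) and the tightness of the bipartite construction in slightly more detail than the paper does.
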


\begin{proof}
  Transitive reductions do not contain triangle (i.e.\ clique of size
  three), otherwise there is either a cycle or a redundant edge.
  By Mantel's Theorem\cite{mantel1907problem}, the maximum number of
  edges in a $n$-vertex triangle-free graph is
  $\left\lfloor \frac{n^2}{4} \right\rfloor$.
  This is the case for the complete bipartite DAG because the number
  of edges is $\frac{n^2}{4}=\left\lfloor \frac{n^2}{4} \right\rfloor$
  when $n$ is even and
  $\frac{n^2-1}{4}=\left\lfloor \frac{n^2}{4} \right\rfloor$ when
  $n$ is odd.
\end{proof}

The edge-related properties are considerably affected when considering
the transitive reduction of the complete DAG, i.e.\ the chain.
Except for the standard deviations, all such properties are divided by
$O(n)$.
Considering transitive reductions can thus lead to different
conclusions.
The edge-related properties also highlight the asymmetry of both trees
through the difference between input and output degrees.
Moreover, the density of a DAG appears to be quantified by the
edge-related properties (e.g.\ the complete DAG and last three DAGs).
Small values for the degree standard deviations characterize DAGs in
which every vertex shares a similar structure (e.g.\ the empty DAG,
chain, trees and combs).
The length and shape-based properties show whether the DAG is short
(empty and bipartite DAGs), balanced (the trees, square triangular
DAGs) or long (the complete DAG, chain and combs).
The maximum shape equals the width except for the reversed comb, which
confirms the results shown in Table~\ref{tab:comparison} on the
similarity between the maximum shape and the width.
Finally, large values for the shape standard deviation characterize DAGs for which
the parallelism varies significantly.
This is the case for the trees and triangular DAG\@.

The analysis of these special DAGs provides some insight to select the
relevant properties in the rest of this paper.
Each given DAG possesses 18 properties, to which we add 9 properties
by considering the transitive reduction and 7 properties by
considering the reversal (for a total of 34 properties, $n$ excluded).
We limit the scope of our study to discard some properties for
simplicity.
First, we assume that the generated DAGs are symmetrical and have
similar properties through the reversal operation (which is the case
for all special DAGs except for the trees and combs).
This eliminates the 7 properties on the reversal.
Moreover, the following properties become redundant with
$\degree{}{max}$ and $\degree{}{sd}$: $\degree{in}{max}$,
$\degree{out}{max}$, $\degree{in}{sd}$ and $\degree{out}{sd}$ (which
eliminates 8 additional properties).
Second, we assume that only transitive reductions are meaningful in
the context of scheduling without communication.
This eliminates only 4 other properties because we keep the number of
edges in the initial DAG because it provides meaningful information on
the generation method.
Moreover, we discard the mean degree because it is redundant with $n$
and $m$, and provides little insight.
Similarly, the minimum degree is not kept because it may be
uninformative as it is low for source and sink vertices.
We also discard the width and maximum shape because the mean shape
provides a more global information.
The mass already takes into account the minimum shape, which we
discard.
The last two shape properties ($\shape{1}$ and $\shape{k}$) provides
only local information and are thus not kept.

This leaves 8 properties.
In particular, we measure the following edge-related properties on the
transitive reduction of any DAG\@: the number of edges, maximum degree
and degree standard deviation.
Additionally, we keep the length, the mean shape (even though it is
redundant with $n$ and $\len$, it provides essential information on
the global parallelism of the DAG), the shape standard deviation and
the mass.
The final property is the number of edges in the initial DAG\@.

\begin{figure}
  \centering
  \includegraphics{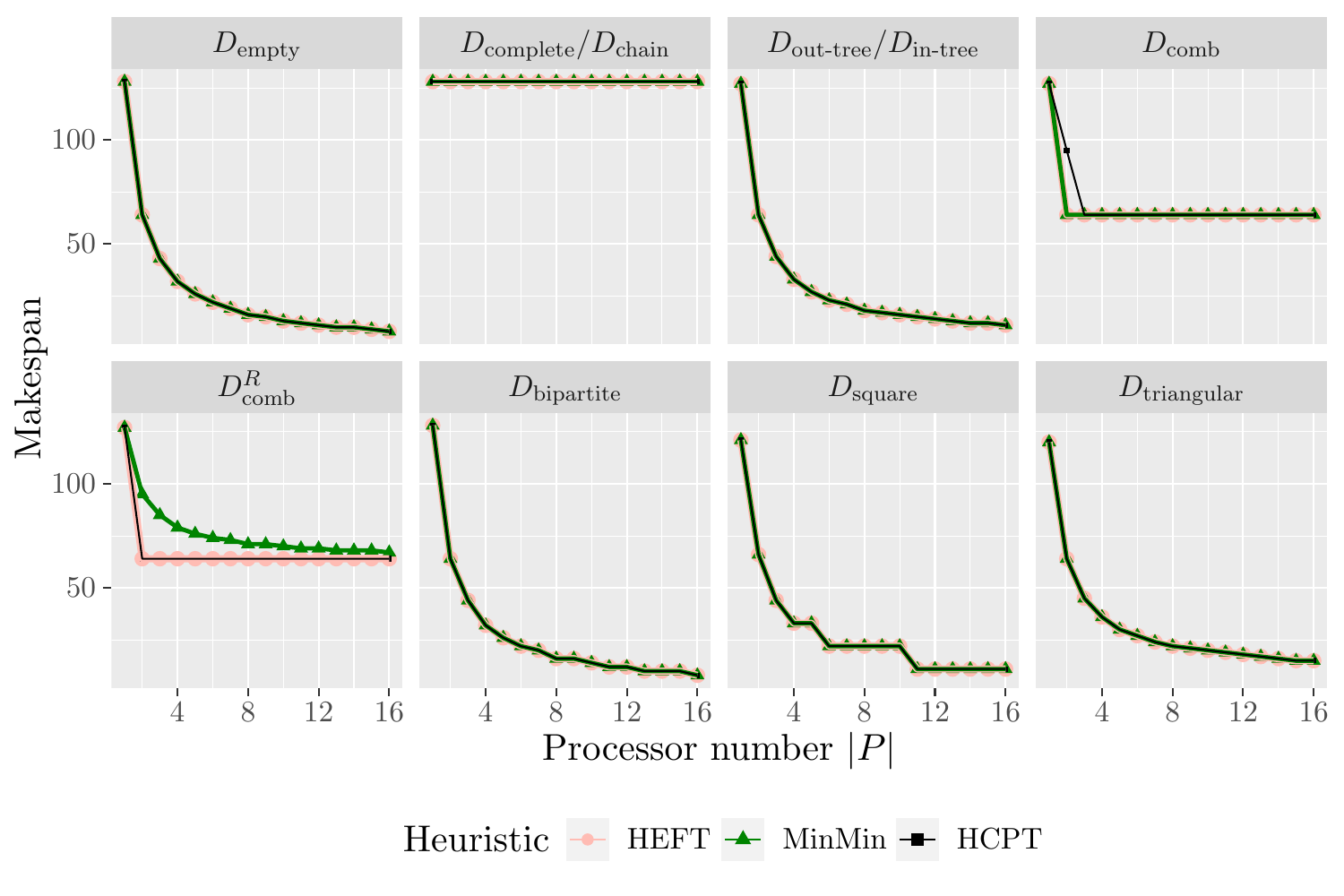}
  \caption{\label{fig:spe_schedule}Makespan obtained with three
    heuristics (described in Section~\ref{sec:eval-sched-algor}) on
    all special DAGs of Section~\ref{sec:analys-spec-dag} for
    \problem.
    The number of vertices is $n=128$ for the empty DAG, complete and
    bipartite DAGs, $n=127$ for the trees and combs, $n=121$ for the
    square DAG, and $n=120$ for the triangular DAG
    ($1+\cdots+15=120$).
  }
\end{figure}

Figure~\ref{fig:spe_schedule} shows the makespan obtained with three
scheduling heuristics with all special DAGs as the number of
processors varies.
HEFT is always optimal because of the regularity of the DAG structures
and because costs are unitary.
This is also the case for the other heuristics most of the time.
A zero mass, for long DAGs such as the complete DAG and chain, leads
to an even easier scheduling problem where the number of processors
has no impact.
This confirms the discussion in Section~\ref{sec:mass-scheduling}
stating that low mass is characteristic of easy instances.
For the other DAGs, increasing the number of processors decreases the
makespan until it reaches 1, 2, 7, 11, 15 and 50 for the empty DAG,
bipartite DAG, trees, square DAG, triangular DAG and combs,
respectively.
Note that the stairs for the square are due to its layered structure.
For the reversed comb, MinMin behaves poorly because this simple
heuristic does not take into account the critical path and fill the
processors with any of the initial source vertices.
Finally, the sub-optimal schedule produced by HCPT for the comb DAG is
because, contrarily to HEFT with its insertion mechanism, this
heuristic does not rely on backfilling and cannot schedule a task
before any other already scheduled tasks.

%%%%%%%%%%%%%%%%%%%%%%%%%%%%%%%%%%%%%%%%%%%%%%%%%%%%%%%%%%%%%%%%%%%%%%%%%%
\section{Analysis of Existing Generation Methods}\label{sec:existinggeneration}

This section covers and analyzes existing generation methods: the
classic \erdos algorithm; a uniform random generation method via a
recursive approach; a poset-based method; and, an ad-hoc method
frequently used in the scheduling literature.

%%%%%%%%%%%%
\subsection{Random Generation of Triangular Matrices}
\label{sec:rand-gener-triang}

This approach is based on the \erdos algorithm\cite{erdos1959a} with
parameter $p$ (noted $G(n,p)$ in\cite{bollobas2001random}): an
upper-triangular adjacency matrix is randomly generated.
For each pair of vertices $(i,j)$, with $i<j$, there is an edge from
$i$ to $j$ with an independent probability $p$.
The expected number of edges is therefore $p\frac{n(n-1)}{2}$.

The approach is not uniform (nor uniform up to isomorphism).
For instance, a generator that is uniform up to isomorphism picks up
the empty DAG with probability $1/6$ (see Table~\ref{tab:dag3}).
Moreover, a random generator that is uniform over all the DAGs (see
Section~\ref{sec:unif-rand-gener} for the distinction) generates the
empty DAG with probability $1/25$.
With $p=0.5$, the \erdos algorithm generates the DAG with no edges
with probability $1/8$.

\begin{figure}
  \centering
  \includegraphics{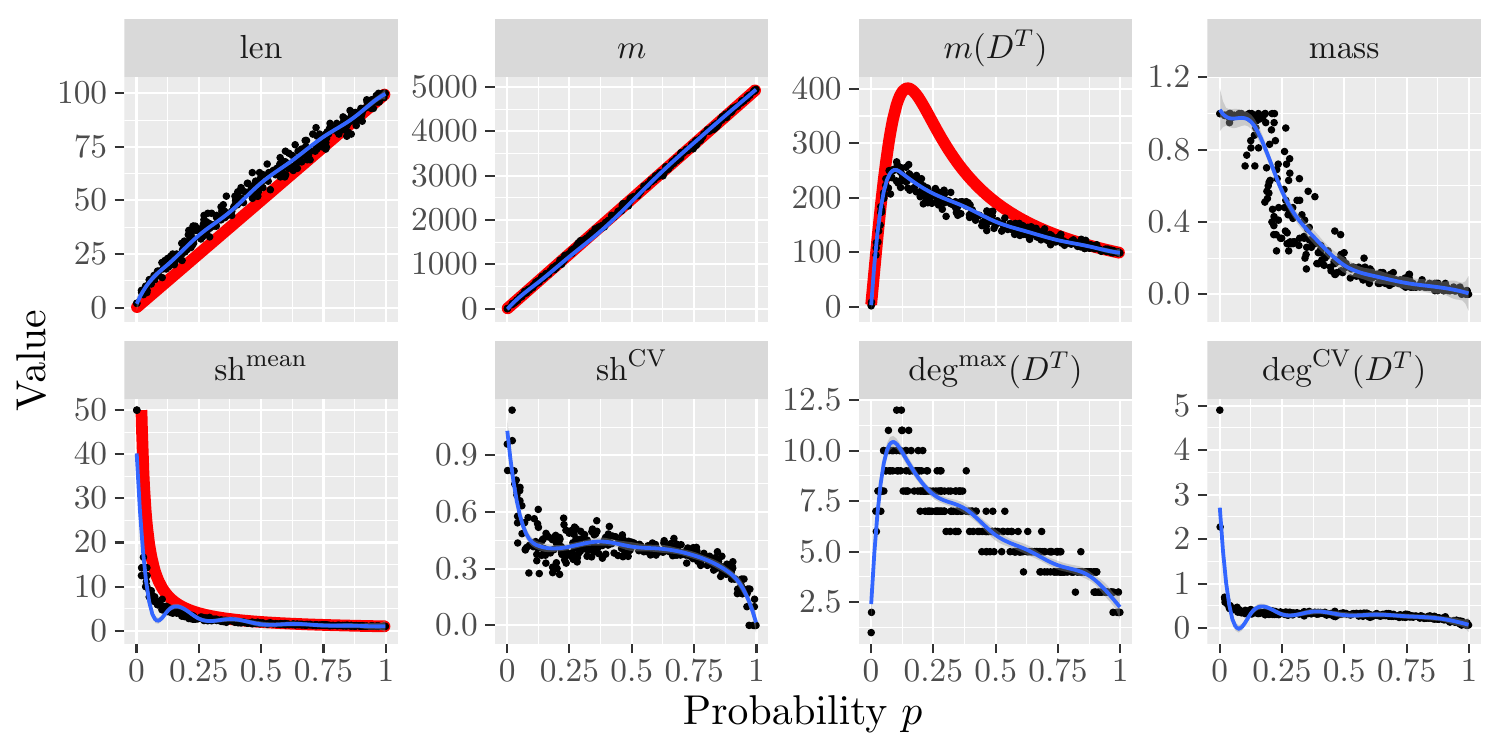}
  \caption{\label{fig:erdos_proba} Properties of 300 DAGs of size
    $n=100$ generated by the \erdos algorithm with probability $p$
    uniformly drawn between 0 and 1.
    The smoothed line is obtained with a linear regression using a
    polynomial spline with 10 degrees of freedom.
    The degree CV (Coefficient of Variation) is the ratio of the mean
    degree to the degree standard deviation.
    Red lines correspond to formal results for the length and mean
    shape (Proposition~\ref{prop:ERmean}), the number of edges, and
    the number of edges in the transitive reduction
    (Proposition~\ref{prop:ERET}).
  }
\end{figure}

Figures~\ref{fig:erdos_proba} and~\ref{fig:erdos_node} show the effect
of both parameters, probability $p$ and size $n$, on the properties of
the generated DAGs.
For readability of both figures, each standard deviation is replaced
by a CV (Coefficient of Variation), which is the ratio of the standard
deviation to the mean.
The most evident effect on both figures is that the number of edges
$m$ increases linearly as $p$ increases and quadratically as $n$
increases, which is a direct consequence of the algorithm and the
expected number of edges.
Similarly, but with more variation, the length also increases as
either parameter increases.
This effect also concerns the mean shape because
$\shape{mean}=\frac{n}{\len}$ (for instance, the length is close to 20
when $p=0.125$, whereas the mean shape is close to 5).
Therefore, on Figure~\ref{fig:erdos_proba}, the mean shape decreases
as the inverse function of the probability $p$ because the length
increases quasi-linearly with $p$.
This effect is consistent with Proposition~\ref{prop:ERmean} in
Appendix~\ref{sec:annexe:probaER}, which suggests that the expected
mean shape is no greater than $\frac1{p}$.

A more remarkable effect can be seen for the number of edges in the
transitive reduction $m(D^T)$.
This property shows that after a maximum around $p=0.10$, adding more
edges with higher probabilities leads to redundant dependencies and
simplifies the structure of the DAG by making it longer.
The same observation can be done with $\degree{}{max}(D^T)$.
This is consistent with the fact that the algorithm generates the
empty DAG when $p=0$ and the complete DAG when $p=1$.
Proposition~\ref{prop:ERET} in Appendix~\ref{sec:annexe:probaER} also
confirms this effect.

We rely on this apparent threshold around $p=10\%$ to characterize
three probability intervals: below 5\%, between 5\% and 15\%, and
above 15\%.
DAGs generated with a probability in the first interval are almost
empty (hence a length lower than 10 and a mean shape higher than 10)
with few vertices having some edges and many with no edges (hence the
high degree standard deviation).
For these DAGs, most edges are not redundant.
Given the high shape standard deviation, many tasks must be available
at first.
As mentioned in Section~\ref{sec:scheduling}, these DAGs lead to a
simplistic scheduling process that consists in starting each task on a
critical path as soon as possible and then distributing a large number
of independent tasks.
Analogously, DAGs generated with probabilities $p$ greater than 15\%
contain many edges that simplify the DAG structure by increasing the
length and thus reducing the mean shape (recall that with a small
width, the problem is easy, see Section~\ref{sec:scheduling}).
At the same time, the mass decreases continuously, allowing the
problem to be divided into smaller problems.
In particular, for probability $p$ greater than 90\%, DAGs are close
to the chain, which is trivial to schedule.
Therefore, most interesting DAGs are generated with probabilities
between 5\% and 15\%.

\begin{figure}
  \centering
  \includegraphics{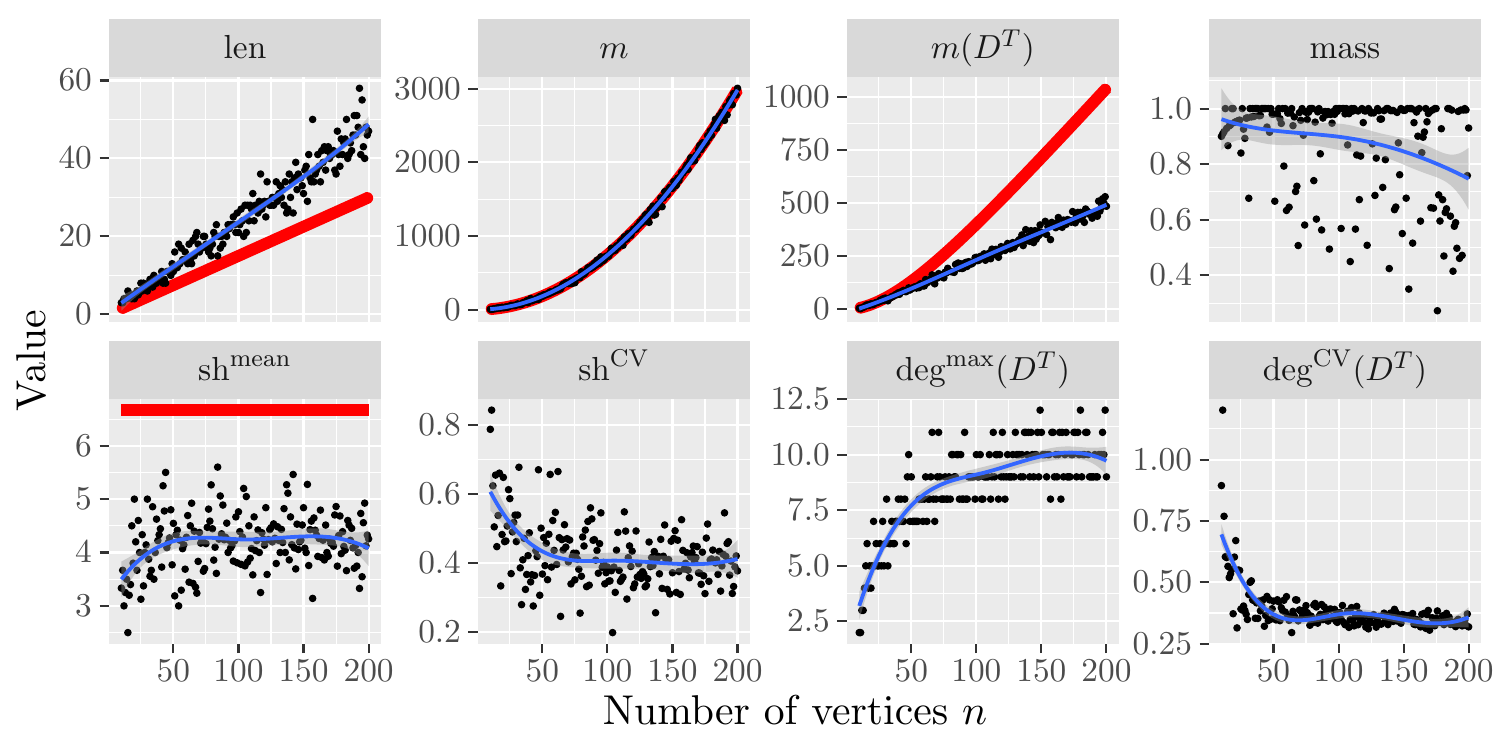}
  \caption{\label{fig:erdos_node} Properties of 191 DAGs generated by
    the \erdos algorithm with probability $p=0.15$ and for each size
    $n$ between 10 and 200.
    The smoothed line is obtained with a linear regression using a
    polynomial spline with 4 degrees of freedom.
    Red lines correspond to formal results for the length and mean
    shape (Proposition~\ref{prop:ERmean}), the number of edges, and
    the number of edges in the transitive reduction
    (Proposition~\ref{prop:ERET}).
  }
\end{figure}

As shown on Figure~\ref{fig:erdos_node}, the size of the DAG $n$ has a
simpler effect on the number of edges in the transitive reduction
$m(D^T)$ than the probability $p$: $m(D^T)$ increases linearly with
$n$ (see Proposition~\ref{prop:ERET}).
Moreover, the length increases with $n$ as the shape mean remains
constant (see Proposition~\ref{prop:ERmean}).
As a consequence, the mass decreases with $n$ because the probability
to obtain the value 1 increases in a vector with constant mean but
increasing size.
It is thus advisable to lower the probability with large sizes to
maintain a constant mass.

The analysis of the \erdos algorithm provides some insight on the
desirable characteristics for the purpose of comparing scheduling
heuristics.
The effect of probability $p$ illustrates the compromise between the
length and mean shape to avoid simplistic instances that are easily
tackled (see Section~\ref{sec:scheduling}).
Moreover, the maximum number of edges in the transitive reduction
$m(D^T)$ is around $\frac5{2}n$ in both figures.
However, we know that reaching $\frac{n^2}{4}$ is possible
(Proposition~\ref{prop:bipartite}) and layer-by-layer DAGs (square and
triangular) are in $O(n^{\frac{3}{2}})$.
Therefore, the \erdos algorithm fails to generate DAGs with such large
$m(D^T)$.

%%%%%%%%%%%
\subsection{Uniform Random Generation}
\label{sec:recursive}

There are two main ways to provide a uniform random generator to
uniformly generate elements of $\Dn$ (uniform over all labelled DAGs,
see Section~\ref{sec:unif-rand-gener}).
The first one consists in using a classical recursive/counting
approach\cite{roblab}.
This counting approach relies on recursively counting the number of
DAGs with a given number of source vertices, that is vertices with no
in-going edges.
See\cite[Section~4]{DBLP:journals/sac/KuipersM15} for a complete
algorithm that uniformly generates random DAGs with this approach.
The second one relies on MCMC
approaches\cite{DBLP:journals/endm/MelanconDB01,DBLP:conf/sbia/IdeC02,DBLP:journals/ipl/MelanconP04}.
We describe below the recursive approach.

Let $a_n=|\Dn|$, $a_{n,s}$ be the number of DAGs of $\Dn$ having
exactly $s$ source vertices ($\shape{1}=s$).
It is proved in\cite{roblab} that:
$$a_n=\sum_{k=1}^na_{n,k}\quad \text{and}\quad
a_{n,k}={\binom{n}{k}}b_{n,k} \quad \text{with}\quad
b_{n,k}=\sum_{s=1}^{n-k}(2^k -1)^s2^{k(n-k-s)}a_{n-k,s}.$$

First, we compute all values $a_i$ and $a_{i,k}$ for $1\leq i\leq n$
and $1\leq k\leq i$ with the initial conditions $a_{i,i}=1$ for
$1\leq i\leq n$.
Next, a shape is generated using Algorithm~\ref{algo:shaperec}, where
$\oplus$ is the concatenation of vectors.

\begin{algorithm}
  \SetAlgoLined

  \KwData{$n$, $a_{i}$, $a_{i,k}$ for $1\leq i\leq n$ and
    $1\leq k\leq i$.}
  \KwResult{A shape with $n$ elements.}

  Randomly generate $s\in \{1,n\}$ with distribution
  $\Prob(s=j)=\frac{a_{n,j}}{a_n}$\;
  \KwRet{$[s]\oplus\textnormal{RandomShape}(n-s)$\;}
\caption{$\textnormal{RandomShape}(n)$}\label{algo:shaperec}
\end{algorithm}

Finally, Algorithm~\ref{algo:shapeDAG} builds the final DAG by adding
random edges.

\begin{algorithm}
  \SetAlgoLined

  \KwData{$[s_1,\ldots,s_k]$ a shape with $n$ elements.}
  \KwResult{A DAG with $n$ vertices.}

  \For{$i\in [1,\ldots,k]$}{
    \For{$j\in [1,\ldots,s_i]$}{
      Generate a vertex $v$ with level $i$\;
      \If{$i>1$}{
        Connect a random vertex from level $i-1$ to vertex $v$\;
        Connect any other vertex from previous levels to vertex $v$
        with probability 0.5\;
      }
    }
  }
  \KwRet{the resulting DAG\;}
\caption{$\textnormal{ShapeToDAG}([s_1,\ldots,s_k])$}\label{algo:shapeDAG}
\end{algorithm}

\begin{figure}
  \centering
  \includegraphics{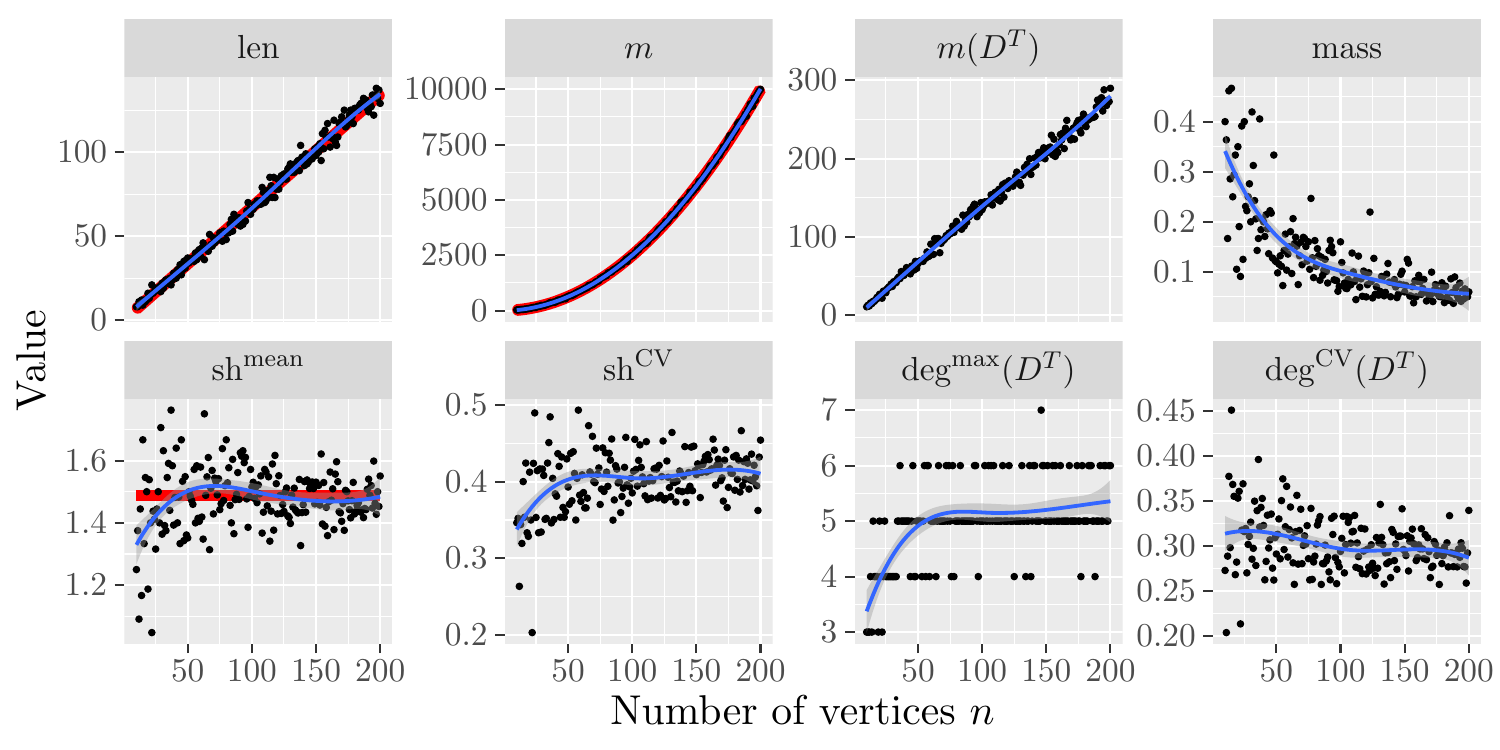}
  \caption{\label{fig:recursive} Properties of 191 DAGs generated by
    the recursive algorithm for each size $n$ between 10 and 200.
    The smoothed line is obtained with a linear regression using a
    polynomial spline with 4 degrees of freedom.
    Red lines correspond to formal results for the length and mean
    shape, and the number of edges (the bound from
    Theorem~\ref{theorem:mass} is discarded because it is too far).
  }
\end{figure}

Figure~\ref{fig:recursive} depicts the effect of the number of
vertices on the selected DAG properties.
Three effects are noteworthy: the length closely follows the function
$\frac{3n}{2}$, the number of edges $m$ is almost indistinguishable
from the function $\frac{n^2}{4}$ and the number of edges in the
transitive reduction $m(D^T)$ closely follows $1.4n$.
The first effect is consistent with a theoretical result stating that
the expected number of source vertices $\shape{1}$ in a uniform DAG is
asymptotically 1.488 as $n\to\infty$\cite{liskovetsmaixmal}.
This implies that the expected value for each shape element is close
to this value by construction of the shape.
Proposition~\ref{prop:expectshape} in Appendix~\ref{sec:annexe:proba}
confirms this expectation is no larger than 2.25, which makes the DAG
an easy instance for scheduling problems (see
Section~\ref{sec:scheduling}).
For the second effect, we know that the average number of edges in a
uniform DAG is indeed $\frac{n^2}{4}$\cite[Theorem
2]{DBLP:journals/endm/MelanconDB01}.
Despite the large amount of studies dedicated to formally analyzing
uniform random DAGs, to the best of our knowledge, the last effect has
not been formally considered.
We finally observe that the mass decreases as the size $n$ increases.
This is confirmed by the following result, proved in
Appendix~\ref{sec:annexe:proba}:

\begin{theorem}\label{theorem:mass}
  Let $D$ be a DAG uniformly and randomly generated among the labeled
  DAGs with $n$ vertices.
  One has $\Prob(\absmass(D)\geq \log^4(n))\to 0$ when
  $n\to+\infty$.
\end{theorem}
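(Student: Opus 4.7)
The plan is a first-moment argument at the level of shapes. Writing $M=\log^4 n$ and calling a shape ``bad'' if it contains a block of absolute mass at least $M$, I would bound $\sum_{S\text{ bad}}N(S)/a_n$, where
$$N(S) = \binom{n}{s_1,\ldots,s_k}\prod_{j=2}^{k}(2^{s_{j-1}}-1)^{s_j}\cdot 2^{s_j(s_1+\cdots+s_{j-2})}$$
is the number of labeled DAGs with shape $S=(s_1,\ldots,s_k)$. This formula follows by placing vertices into layers and then giving each vertex of layer $j\geq 2$ a non-empty set of predecessors in layer $j-1$ together with an arbitrary set in earlier layers, and it satisfies $a_n=\sum_S N(S)$.

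The workhorse is a compound swap applied inside the offending block from right to left. A single swap at position $j$ (with $s_j\geq 2$) inserts a singleton just before $s_j$ and decrements that layer to $s_j-1$, producing a shape $S'$ with
$$\frac{N(S')}{N(S)} = s_j\cdot\Bigl(\frac{2^{s_{j-1}}}{2^{s_{j-1}}-1}\Bigr)^{s_j-1}\cdot\Bigl(\frac{2^{s_j}-2}{2^{s_j}-1}\Bigr)^{s_{j+1}}.$$
Processing the swaps from right to left within the block ensures $s_{j+1}=1$ at every step (either the right boundary of the block or the singleton just inserted), so the loss factor is always at least $2/3$, while the gain factor is at least $s_j\cdot(4/3)^{s_j-1}\geq 4/3$ per swap. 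Iterating across the whole block gives a compound ratio $N(S')/N(S)\geq 2^{\Omega(M)}$. Canonicalising the sequence of splits makes the map $S\mapsto S'$ injective on bad shapes, so $\sum_{\text{bad }S}N(S)\leq 2^{-\Omega(M)}a_n$. A union bound over the $O(n^2)$ possible block positions yields $\Prob(\absmass(D)\geq M)\leq n^{O(1)}\cdot 2^{-\Omega(\log^4 n)}=o(1)$.

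The main obstacle is maintaining the $2^{\Omega(M)}$ compound ratio uniformly across all block configurations, especially when the block's interior contains a very large layer (the gain factor $(2^{s_{j-1}}/(2^{s_{j-1}}-1))^{s_j-1}$ collapses to $\approx 1$ if $s_{j-1}$ is itself large). The right-to-left ordering is designed so that, even in such cases, the subsequent splits of the same large layer into its constituent singletons contribute at least $\Omega(s_j\cdot 2^{s_j})$ of gain, which more than compensates. A careful accounting of these per-swap contributions, together with verifying the injectivity of the canonical splitting sequence, is where the bulk of the technical work lies.
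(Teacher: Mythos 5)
Your approach is genuinely different from the paper's. The paper never touches the shape-counting formula directly: it bounds $\absmass(D)\leq \shape{max}(D)\cdot\ell_{\max}(D)$ (largest layer size times longest run of non-singleton layers) and controls the two factors separately via the conditional law of the number of sources in a uniform DAG, namely the doubly-exponential tail $\Prob(x_i>r\mid s_i)\leq 2^{-r(r+1)/2}$ (giving $\Prob(\shape{max}\geq\log^2 n)\to 0$ by Markov) and the uniform lower bound $\Prob(x_i=1\mid s_i)\geq\alpha_0>0$ (giving $\Prob(\ell_{\max}\geq\log^2 n)\leq n(1-\alpha_0)^{\log^2 n}\to 0$). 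Your ingredients are sound up to a point: the formula for $N(S)$ is correct and does satisfy $a_n=\sum_S N(S)$, and I verified that your single-swap ratio $s_j\,(2^{s_{j-1}}/(2^{s_{j-1}}-1))^{s_j-1}\,((2^{s_j}-2)/(2^{s_j}-1))^{s_{j+1}}$ is exact, with the right-to-left order indeed guaranteeing a loss factor of at least $2/3$ per swap.

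The gap is the injectivity claim, and it is not a formality. To accumulate a $2^{\Omega(M)}$ ratio you must iterate the swap until the offending block is reduced entirely to singletons, so the target $S'$ is the shape with that block replaced by $M$ consecutive $1$'s. Every bad shape whose block is a composition of $M$ into parts $\geq 2$ (with the same surrounding layers) maps to the \emph{same} $S'$; no canonicalisation of the order of splits can change this, because $N(S')$ only sees the final shape. The number of such preimages is the number of compositions of $M$ into parts $\geq 2$, which grows like $\varphi^M\approx 2^{0.69M}$, whereas your uniform per-swap gain of $4/3$ over at least $M/2$ swaps yields only $2^{0.21M}$: preimage count times worst-case ratio does not close. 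The argument is repairable, because the preimages with few large parts have far larger gains — decomposing a layer of size $s$ whose left neighbour has just become a singleton gains roughly $s!\,2^{\binom{s-1}{2}}(2/3)^{s-1}$ — and one can check that $\sum_{s\geq 2}(3/2)^{s-1}/(s!\,2^{\binom{s-1}{2}})<1$, so the \emph{weighted} sum over preimages still decays exponentially in $M$. But that weighted multiplicity computation (whose margin is thin: the sum above is about $0.96$) is precisely the core of the proof and is absent. Separately, the stated gain bound $(4/3)^{s_j-1}$ for the middle factor is false whenever $s_{j-1}\geq 3$, as you acknowledge; only the net bound $s_j\cdot(2/3)\geq 4/3$ survives in general.
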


Therefore, the mass converges to zero as the size $n$ tends to
infinity.
As shown in Section~\ref{sec:mass-scheduling}, such instances can be
decomposed into independent problems and efficiently solved with a
brute force strategy.
This leads to a sub-exponential generic time complexity with uniform
instances.

To obtain a similar average number of edges $m$ with the \erdos
algorithm, we must choose a probability $p=0.5$.
We can compare both methods by considering $p=0.5$ and $n=100$ on
Figures~\ref{fig:erdos_proba} and~\ref{fig:recursive}, respectively.
We observe that DAGs generated by both methods share similar
properties.
This leads to similar conclusions as in
Section~\ref{sec:rand-gener-triang}.

%%%%%%%%%%%
\subsection{Random Orders}
\label{sec:random-orders}

The random orders method derives a DAG from randomly generated
orders\cite{winkler1985random}.
The first step consists in building $K$ random permutations of $n$
vertices.
Each of these permutations represents a total order on the vertices,
which is also a complete DAG with a random labeling.
Intersecting these complete DAGs by keeping an edge iff it appears in
all DAGs with the same direction leads to the final DAG\@.
This is a variant of the algorithm presented in\cite{cordeiro2010a}
where the transitive reduction in the last step is not performed
because we already measure the properties on the transitive reduction.

\begin{figure}
  \centering
  \includegraphics{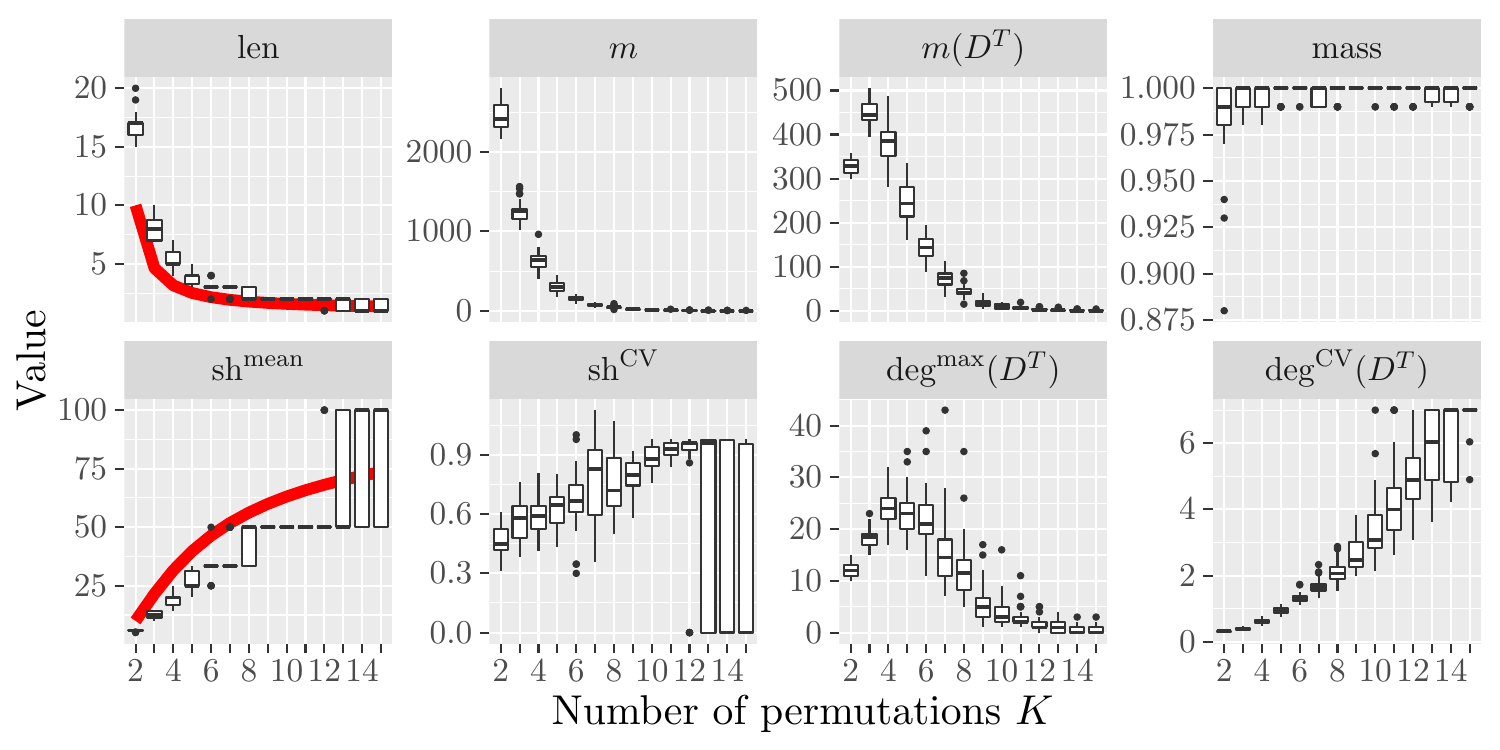}
  \caption{\label{fig:poset_perm} Properties of 420 DAGs of size
    $n=100$ generated by the random orders algorithm for each number
    of permutations $K$ between 2 and 15 (30 DAGs per boxplot).
    Red lines correspond to formal results for the length and mean
    shape.
  }
\end{figure}

Figure~\ref{fig:poset_perm} shows the effect of the number of
permutations $K$ on the DAG properties with boxplots\footnote{Each
  boxplot consists of a bold line for the median, a box for the
  quartiles, whiskers that extend at most to 1.5 times the
  interquartile range from the box and additional points for
  outliers.}.
The extreme cases $K=1$ and $K\to\infty$ are discarded from the figure
for clarity.
They correspond to the chain and the empty DAG, respectively.
Recall that for the chain, $m(D^T)\approx\len=100$,
$m\approx\numprint{5000}$, $\shape{mean}=\degree{}{max}(D^T)=1$ and
the CVs and mass are zero.
Similarly, for the empty DAG, $\len=\mass=1$, the mean shape is 100
and all the other properties are zero.

The number of permutations quickly constrains the length.
For instance, the length is already between 15 and 20 when $K=2$ and
at most 5 when $K\ge5$.
A formal analysis suggests that the length is almost surely in
$O(n^{1/K})$\cite[Theorem 3]{winkler1985random}, which is consistent
with our observation.
The number of edges and the maximum degree in the transitive reduction
reach larger values than with previous approaches for any size $n$
(twice larger than with the \erdos algorithm).
Moreover, the mass is always close to one for $K>1$.
Some specific values can finally be explained.
First, the maximum value for $\degree{}{CV}(D^T)$ is exactly 7 and
corresponds to DAGs of size $n=100$ with a single edge (2 vertices
have degree 1 and 98 others have 0).
Also, the shape CV is at most 0.98 when the length is 2 (which
frequently when $K\ge10$).
This CV corresponds to a shape with values 99 and 1.

\begin{figure}
  \centering
  \includegraphics{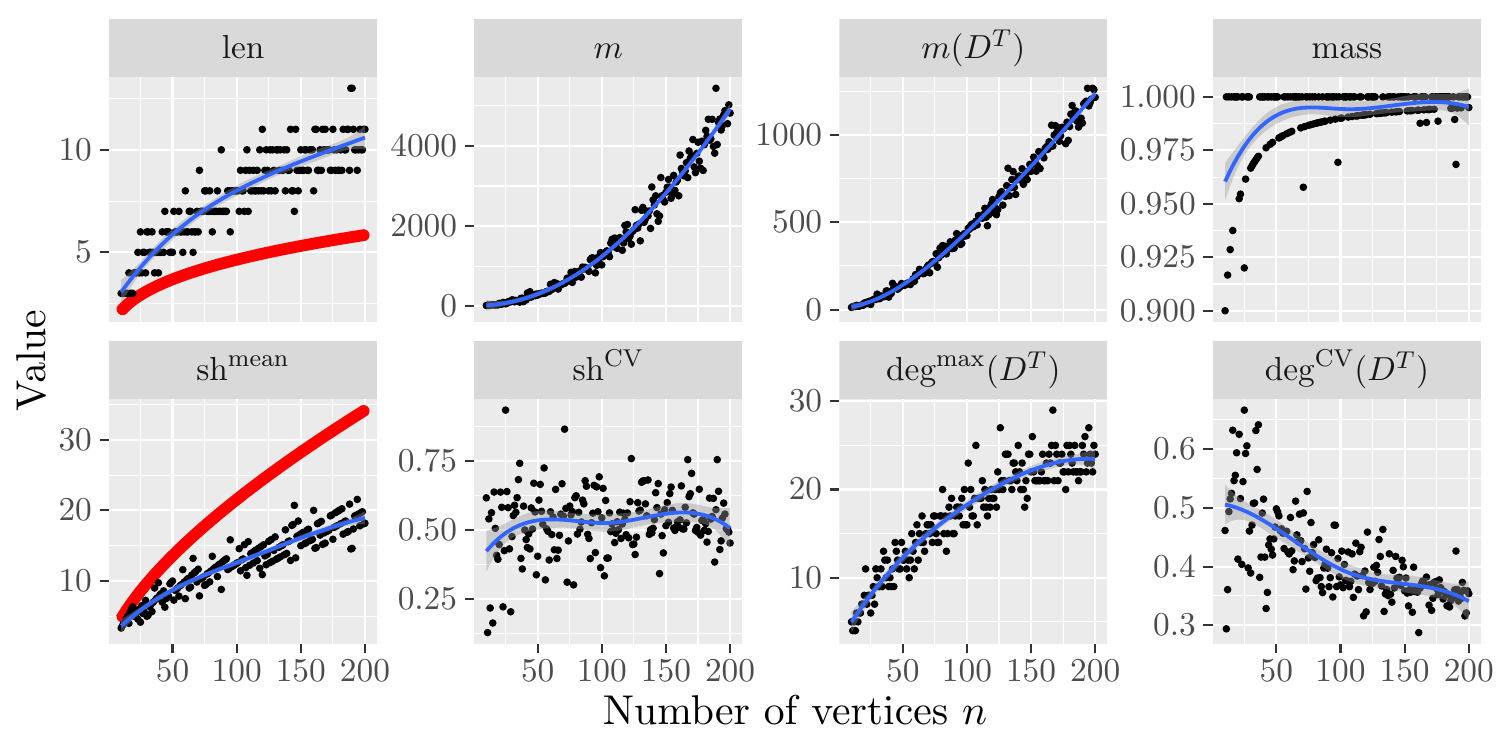}
  \caption{\label{fig:poset_node} Properties of 191 DAGs generated by
    the random orders algorithm with $K=3$ permutations and for each
    size $n$ between 10 and 200.
    The smoothed line is obtained with a linear regression using a
    polynomial spline with 4 degrees of freedom.
    Red lines correspond to formal results for the length and mean
    shape.
  }
\end{figure}

Figure~\ref{fig:poset_node} shows the effect of the number of vertices
$n$ for a fixed number of permutations $K$.
We selected $K=3$ to have the maximum number of edges in the
transitive reduction.
The sublinear relation between the length and size $n$ is again
consistent with the previously cited result (i.e.\ $O(n^{1/K})$).
Even though $K=3$ is small, the length is already low, leading to line
patterns for both the length and the mean shape.
Note that the mass is frequently either 1 or almost 1 (i.e.,
$1-\frac1{k}$), which corresponds to cases where only the last value
of the shape $\shape{k}$ is one.

The random orders method can generate denser DAGs than \erdos or
uniform DAGs without the mass issue, but with difficult control over
the compromise between the length and the mean shape.

%%%%%%%%%%%
\subsection{Layer-by-Layer}

Many variants of the layer-by-layer principle have been used
throughout the literature to assess scheduling algorithms and are
covered in Section~\ref{sec:layer-layer-method}.
This section analyzes the effect of three parameters (size $n$, number
of layers $k$ and connectivity probability $p$) using the following
variant inspired from\cite{cordeiro2010a,gupta2017a}.
First, $k$ vertices are affected to distinct layers to prevent any
empty layer.
Then, the remaining $n-k$ vertices are distributed to the layers using
a balls into bins approach (i.e.\ a uniformly random layer is selected
for each vertex).
For each vertex not in the first layer, a random parent is selected
among the vertices from the previous layer to ensure that the layer of
any vertex equals its depth (similar to\cite{dutot2009a,gupta2017a}
and the recursive method in Section~\ref{sec:recursive}).
Finally, random edges are added by connecting any pair of vertices
from distinct layers from top to bottom with probability $p$.

This variant departs from\cite{cordeiro2010a,gupta2017a} to ensure
generated DAGs have a length equal to $k$ and mean shape equal to
$n/k$.
Moreover, with some parameter values, this method produces some of the
special DAGs covered in Section~\ref{sec:analys-spec-dag}.
It generates the empty DAG when $k=1$, whereas it generates the
complete DAG with $k=n$ and $p=1$.
To interpret the number of edges depicted in
Figures~\ref{fig:layer_proba} to~\ref{fig:layer_node}, we study the
case (called \emph{regular}) when all layers have the same size $n/k$,
which constitutes an approximation of the DAGs generated by the
layer-by-layer variant studied in this section.
When $p=1$, the DAG is the bipartite one for $k=2$ and the square one
for $k=\sqrt{n}$.
In such DAGs and when $n$ is a multiple of $k$, the expected number of
edges is
\begin{equation}\label{eq:edges_regular}
  \Expect(m)=n\left(1-\frac{1}{k}\right)\left(p\left(\frac{n}{2}-1\right)+1\right)
\end{equation}
and the expected number of edges in the transitive reduction is
\begin{equation}\label{eq:edges_red_regular}
  \Expect(m(D^T))\ge p(k-1)\left(\frac{n}{k}\right)^2+(1-p)n\left(1-\frac{1}{k}\right).
\end{equation}

\begin{figure}
  \centering
  \includegraphics{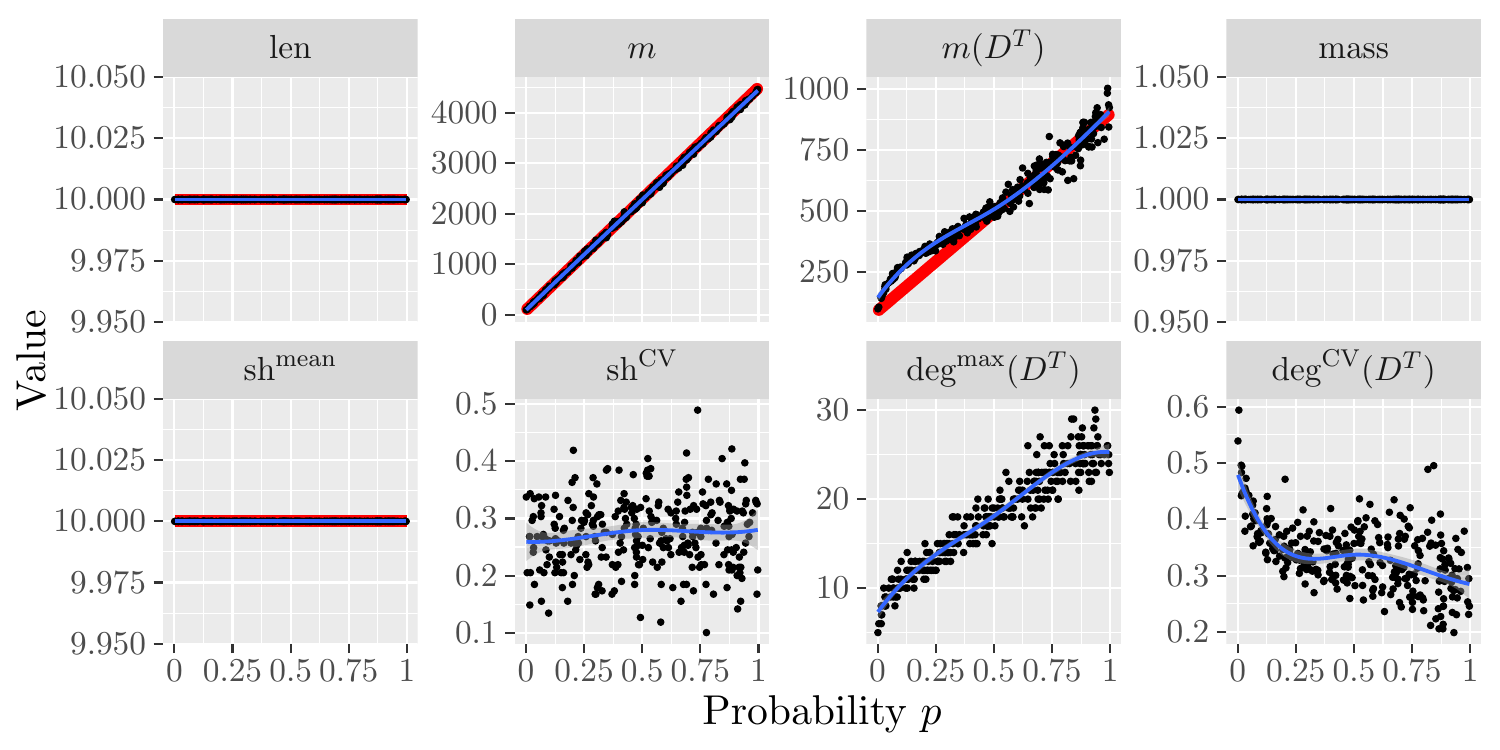}
  \caption{\label{fig:layer_proba} Properties of 300 DAGs of size
    $n=100$ generated by the layer-by-layer algorithm with $k=10$
    layers and probability $p$ uniformly drawn between 0 and 1.
    The smoothed line is obtained with a linear regression using a
    polynomial spline with 4 degrees of freedom.
    Red lines correspond to formal results for the length and mean
    shape, the number of edges (Equation~\ref{eq:edges_regular}), the
    number of edges in the transitive reduction
    (Equation~\ref{eq:edges_red_regular}), and the mass.
  }
\end{figure}

Figure~\ref{fig:layer_proba} shows the effect of the probability $p$.
The analysis for \emph{regular} layer-by-layer DAGs closely
approximates the results.
The number of edges $m$ is predicted to increase linearly from 90 to
\numprint{4500} (Equation~\ref{eq:edges_regular}), while this quantity
in the transitive reduction $m(D^T)$ is expected to increase from 90
to 900 (Equation~\ref{eq:edges_red_regular}).
Remark that this last property undergoes a steeper increase for
probability $p<0.1$ than for larger $p$.
With many edges ($p>0.1$), adding a new one is likely to result into
the introduction of redundant edges, which is not the case for
$p<0.1$.
More generally, the layered structure ensures a steady increase of
$m(D^T)$ as the probability $p$ increases because any edge between two
consecutive layers cannot become redundant through the insertion of
any edge.
The mass is always close to one because the probability to have a
layer with one vertex is close to zero with $k=10$ layers.

\begin{figure}
  \centering
  \includegraphics{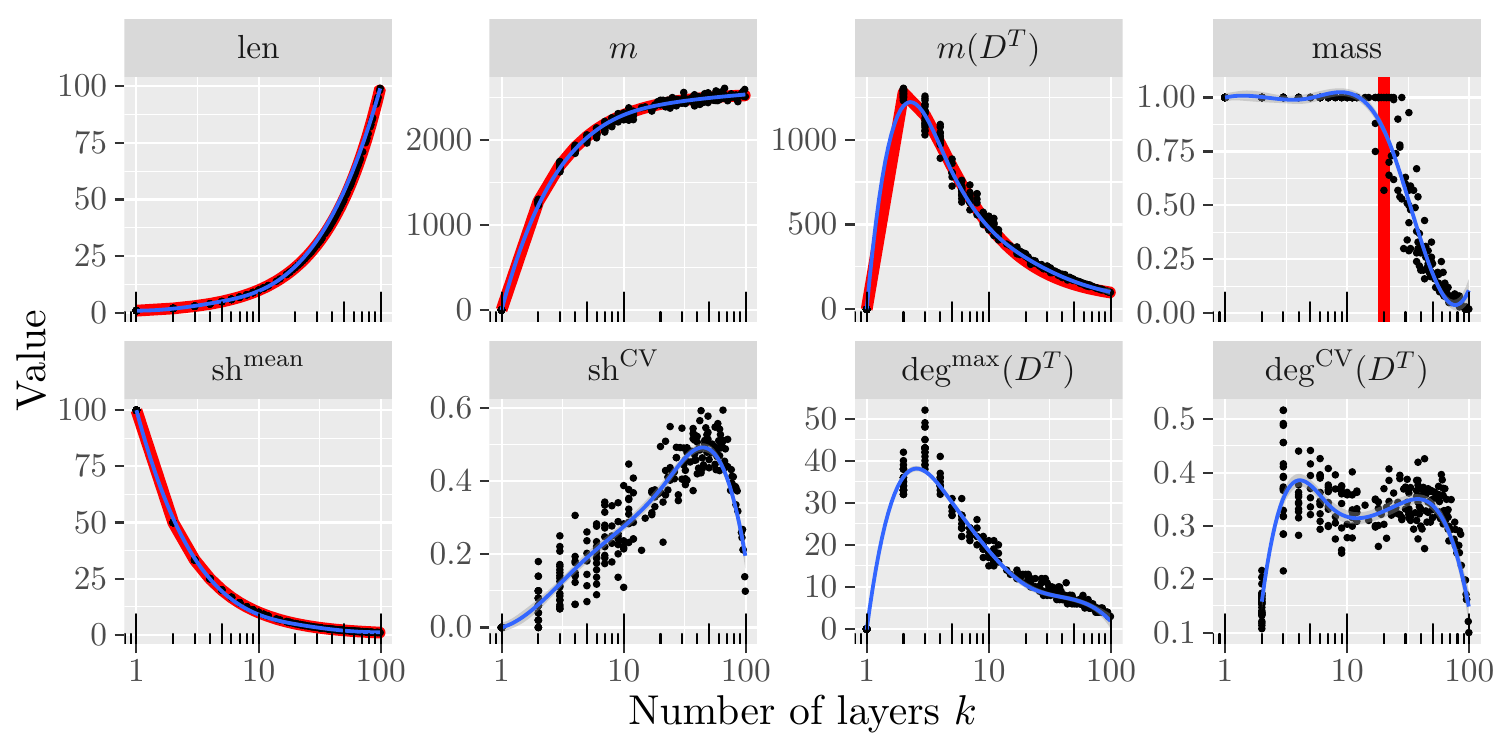}
  \caption{\label{fig:layer_layer} Properties of 300 DAGs of size
    $n=100$ generated by the layer-by-layer algorithm with probability
    $p=0.5$ and a number of layers $k$ randomly drawn between 1 and
    100 ($k=\lfloor {e^{\mathcal{U}(\log(1),\log(101))}} \rfloor$
    where $\mathcal{U}(a,b)$ is a uniform distribution between $a$ and
    $b$).
    The smoothed line is obtained with a linear regression using a
    polynomial spline with 5 degrees of freedom.
    Red lines correspond to formal results for the length and mean
    shape, the number of edges (Equation~\ref{eq:edges_regular}), the
    number of edges in the transitive reduction
    (Equation~\ref{eq:edges_red_regular}), and the mass.
  }
\end{figure}

Figure~\ref{fig:layer_layer} represents the effect of the number of
layers $k$.
With \emph{regular} layer-by-layer DAGs, the expected number of edges
$\Expect(m)$ goes from 0 to \numprint{2524.5} for $k=1$ to 100
(Equation~\ref{eq:edges_regular}), which is close to the results with
our layer-by-layer variant.
The increase is steep because it is already \numprint{2295} for
$k=10$, which is consistent with Figure~\ref{fig:layer_layer}.
The number of edges in the transitive reduction $\Expect(m(D^T))$
decreases from an expected value of \numprint{1275} to 99 as the
number of layers goes from $k=2$ to 100
(Equation~\ref{eq:edges_red_regular}).
The expected value for $k=10$ is 495 and is consistent with both
Figures~\ref{fig:layer_proba} and~\ref{fig:layer_layer}.
Finally, the mass is unitary when there are at least two balls in each
bin.
Since there is initially one ball per bin, this occurs when there is
at least one of the $n-k$ additional balls in each of the $k$ bin.
To compute if there are enough additional balls to have a unitary mass
with probability greater than $0.5$, we can use a bound for the coupon
collector problem\cite[Proposition 2.4]{levin2017markov}.
This occurs when $\left\lceil k\log(2k) \right\rceil+k<n$, which is
the case for $k\le20$ with $n=100$.
This is consistent with Figure~\ref{fig:layer_layer} where the mass
becomes non-unitary around this value.

\begin{figure}
  \centering
  \includegraphics{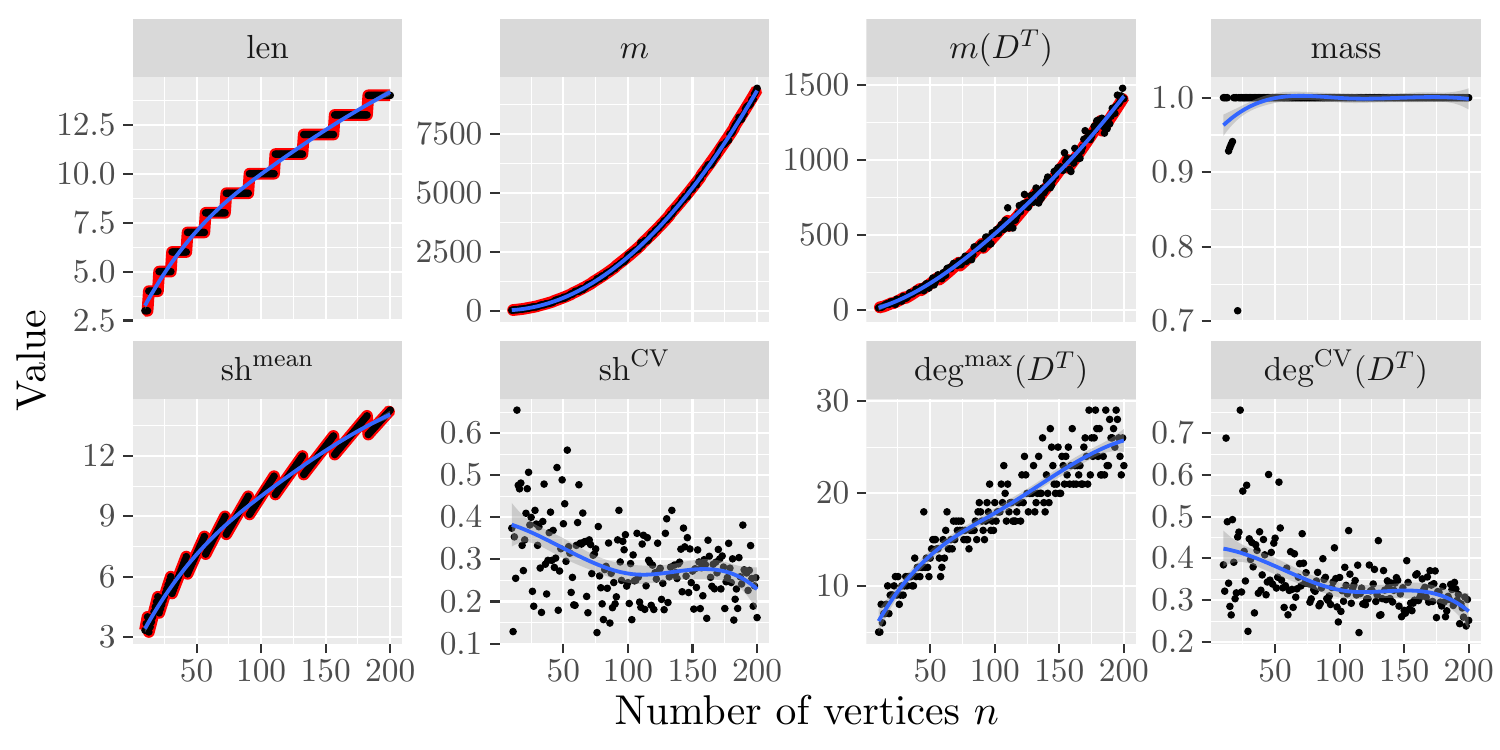}
  \caption{\label{fig:layer_node} Properties of 191 DAGs generated by
    the layer-by-layer algorithm with probability $p=0.5$,
    $k=\sqrt{n}$ (rounded to closest integer) layers and for each size
    $n$ between 10 and 200.
    The smoothed line is obtained with a linear regression using a
    polynomial spline with 4 degrees of freedom.
    Red lines correspond to formal results for the length and mean
    shape, the number of edges (Equation~\ref{eq:edges_regular}), the
    number of edges in the transitive reduction
    (Equation~\ref{eq:edges_red_regular}), and the mass.
  }
\end{figure}

When varying the number of vertices $n$, we expect the number of edges
$m$ to increase quadratically from 20 to around \numprint{9380}
(Equation~\ref{eq:edges_regular}), which is consistent with the
results on Figure~\ref{fig:layer_node}.
Similarly, the number of edges in the transitive reduction $m(D^T)$ is
expected to increase quadratically from around 14.4 to around
\numprint{1420} (Equation~\ref{eq:edges_red_regular}).

In Figures~\ref{fig:layer_proba} to~\ref{fig:layer_node}, the length
and mean shape show stable behavior consistent with our expectation.
In all figures, the shape CV can formally be analyzed using the balls
into bins model and we refer the interested reader to the specialized
literature\cite{kolchin1978random}.
Finally, in the transitive reduction, the maximum degree
$\degree{}{max}(D^T)$ has a similar trend as the number of edges
$m(D^T)$.

To avoid non-unitary mass, the layer-by-layer method can be adapted to
ensure that each layer has two vertices initially.
For instance, we can rely on a uniform distribution between two and a
maximum value, or on a balls into bins approach with two balls per bin
initially.
It is also possible to use the method described in\cite[Section
III]{canon2018markov} to have a uniform distribution of the vertices
in the layers over all possible distributions and with a constraint on
the minimum value.

%%%%%%%%%%%%%%%%%%%%%%%%%%%%%%%%%%%%%%%%%%%%%%%%%%%%%%%%%%%%%%%%%%%%%%%%%%
\section{Evaluation on Scheduling Algorithms}
\label{sec:eval-sched-algor}

Generating random task graphs allows the assessment of existing
scheduling algorithms in different contexts.
Numerous heuristics have been proposed for the problem denoted
\problem (homogeneous tasks and processors, see
Section~\ref{sec:scheduling}) or generalization of this problem.
Such heuristics rely on different principles.
Some simple strategies, like MinMin, execute available tasks on the
processors that minimize completion time without considering
precedence constraints.
In contrast, many heuristics sort tasks by criticality and schedule
them with the Earliest Finish Time (EFT) policy (e.g.\ HEFT and HCPT).
Finally, other principles may be also used: migration for
BSA\cite{kwok2000a}, clustering for DSC\cite{yang1994dsc}, etc.
We focus on the impact of generation methods on the performance of a
selection of three heuristics for this problem: MinMin, HEFT and
HCPT\@.

HEFT\cite{topcuoglu2002a} (Heterogeneous Earliest Finish Time) first
computes the upward rank of each task, which can be seen as a reversal
depth (depth in the reversal DAG).
It then consider tasks by decreasing order of their upward ranks and
schedules them with the EFT policy.
Backfilling is performed following an insertion policy that tries to
insert a task at the earliest idle time between two already scheduled
tasks on a processor if the slot is large enough to accommodate it.
The time complexity of this approach is dominated by the insertion
policy in $O(n^2)$.
Numerous heuristics are equivalent to HEFT when tasks and processors
are homogeneous: PEFT\cite{arabnejad2014list}, HLEFT\cite{adam1974a},
HBMCT\cite{sakellariou2004hybrid}.

HCPT\cite{hagras2003simple} (Heterogeneous Critical Parent Trees)
starts by considering any task on a critical path by decreasing order
of their depth.
The objective is to prioritize the ancestors of such tasks and in
particular when their depth is large.
This process generates a priority list of tasks that are then
scheduling with the EFT policy.
The time complexity is $O(m+n\log(n)+n|P|)$ where $|P|$ is the number
of processors.

Finally, MinMin\cite[Algorithm
D]{ibarra1977}\cite[minmin]{freund1998a} considers all available tasks
any time a processor becomes idle and schedules any task on any
available processor.
With homogeneous tasks and processors, this algorithm is equivalent to
MaxMin\cite[Algorithm E]{ibarra1977}\cite[maxmin]{freund1998a}.
The time complexity is $O(m)$.

\begin{figure}
  \centering
  \includegraphics{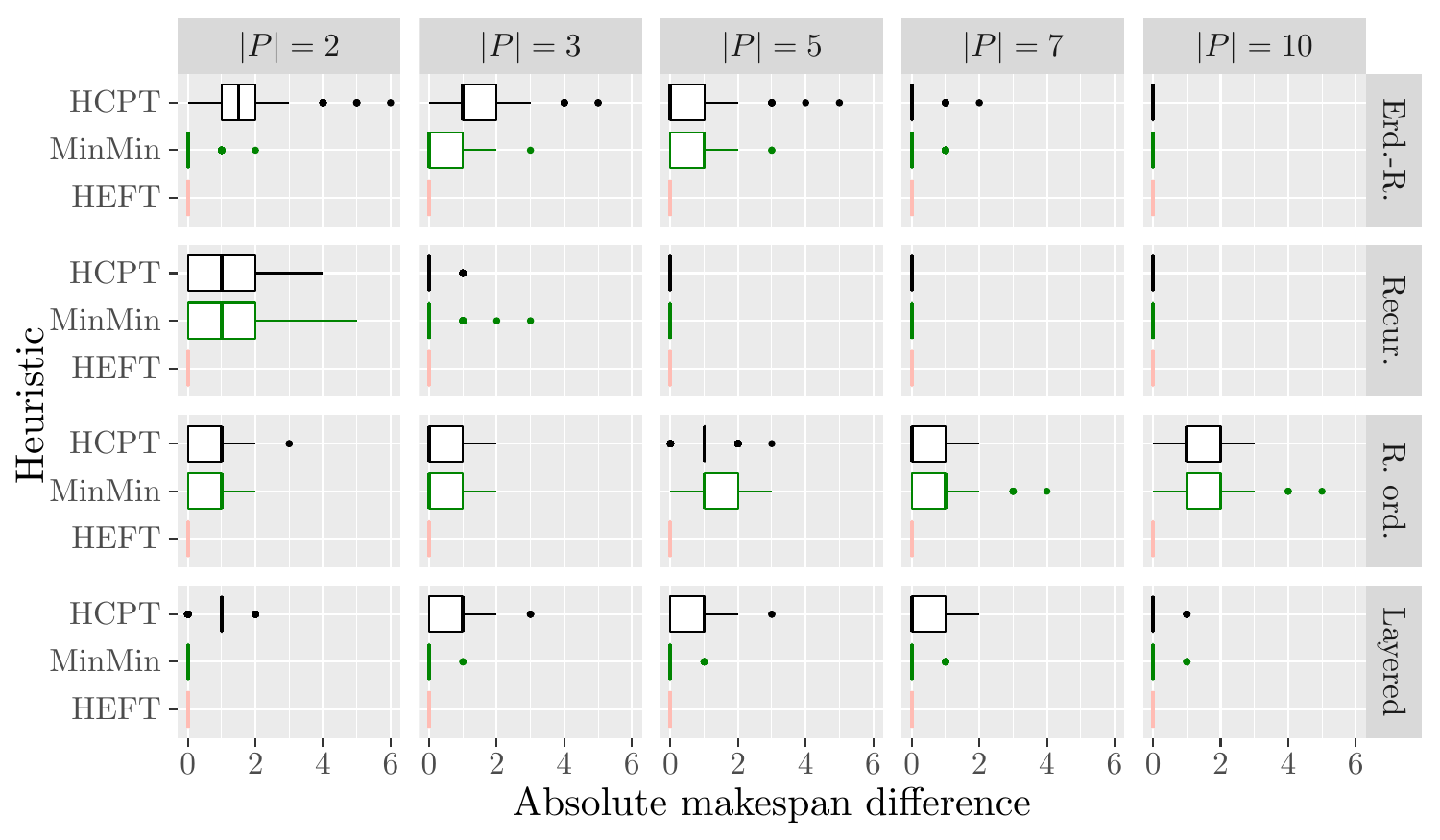}
  \caption{\label{fig:comparison} Difference between the makespan
    obtained with any heuristic and the best value among the three
    heuristics for each instance.
    Each boxplot represents the results for 300 DAGs of size $n=100$
    built with one of the following methods: the \erdos algorithm with
    probability $p=0.15$, the recursive algorithm, the random orders
    algorithm with $K=3$ permutations and the layer-by-layer algorithm
    with probability $p=0.5$ and a number of layers $k=10$.
    Costs are unitary and $|P|$ represents the number of processors.
  }
\end{figure}

Figure~\ref{fig:comparison} shows the absolute difference between
HEFT, HCPT and MinMin for each generation method covered in
Section~\ref{sec:existinggeneration}.
Despite guaranteeing an unbiased generation, instances built with the
recursive algorithm fail to discriminate heuristics except when there
are two processors.
Recall that the mean shape is close to $1.5$ for such DAGs and few
processors are sufficient to obtain a makespan equal to the DAG length
(i.e.\ an optimal schedule).
In contrast, instances built with the random orders algorithm lead to
difference performance for each scheduling heuristics.
However, this generation method has no uniformity guarantee and its
discrete parameter $K$ limits the diversity of generated DAGs.
Finally, the last two algorithms, \erdos and layer-by-layer, fail to
highlight a significant difference between MinMin and HEFT even though
the former scheduling heuristic can be expected to be inferior to the
latter because it discards the DAG structure.

To support these observations, we analyse below the maximum difference
between the makespan obtained with HEFT and the ones obtained with the
other two heuristics.
Because it lacks any backfilling mechanism, HCPT performs worse than
HEFT with an instance composed of the following two elements.
First, a chain of length $k$ with $|P|-1$ additional tasks with
predecessor the $(k-2)$th task of the chain and successor the $k$th
task of the chain.
Alternatively, this first element can be seen as a chain of length
$k-3$ connected to a fork-join with width $|P|$.
The second element is a chain of length $k-1$.
HCPT schedules the first element and then the second one afterward,
leading to a makespan of $2k-1$ whereas the optimal one is $k$.
With $n=100$ tasks and $|P|\le10$, the difference from HEFT with this
instance is greater than or equal to 45.
Moreover, MinMin also performs worse with specific instances.
Consider the ad hoc instances considered in\cite{canon2018a} each
consisting of one chain of length $k$ and a set of $k(|P|-1)$
independent tasks.
Discarding the information about critical tasks prevents MinMin from
prioritizing tasks from the chain.
With $n=100$ tasks and with $|P|\le10$, the worst-case absolute
difference can be greater than or equal to 9 (when MinMin completes
first the independent before starting the chain).
While the difficult instances for HCPT rely on a specific weakness, it
is interesting to analyse the properties of the difficult instances
for MinMin.
Each DAG is characterized by a length equal to $\len=\frac{n}{|P|}$
and a number of edges in the transitive reduction $m(D^T)=\len-1$
(leading to a large width and a large shape standard deviation).
With $n=100$ tasks, with both HCPT and MinMin, the absolute difference
from HEFT can be greater than or equal to 9.

Theses experiments illustrate the need for better generation methods
that control multiple properties while avoiding any generation bias.
An ideal generation method would uniformly select a DAG over all
existing DAGs having a given number of tasks $n$, number of edges $m$
and/or $m(D^T)$, length and/or width, and with a unitary mass.

%%%%%%%%%%%%%%%%%%%%%%%%%%%%%%%%%%%%%%%%%%%%%%%%%%%%%%%%%%%%%%%%%%%%%%%%%%
\section{Conclusion}

This work contributes in three ways to the final objective of
uniformly generating random DAGs belonging to a category of instances
with desirable characteristics.
First, we identify a list of 34 DAG properties and focus on a
selection of 8 such properties.
Among these, the mass quantifies how much an instance can be
decomposed into smaller ones.
Second, existing random generation methods are formally analyzed and
empirically assessed with respect to the selected properties.
Establishing the sub-exponential generic time complexity for
decomposable scheduling problems with uniform instances constitutes
the most noteworthy result of this paper.
Last, we study how the generation methods impact scheduling heuristics
with unitary costs.

The relevance and impact of many other properties need to be
investigated.
For instance, the number of tasks present on a critical path can
exceed the length and even reach $n$.
Also, we could measure the distance of a DAG from a serie-parallel one
by counting with the minimum number of edges to remove in the former
DAG to obtain the latter one.
Both these measures may impact the performance of scheduling
heuristics.

Adapting current results to instances with communication costs
requires some adaptations that need to be explored.
For instance, each edge with a cost could be discarded when there is
another path of higher processing cost (i.e.\ assuming all
communication costs are null on this path).
The definition of the mass could state that a vertex is a bottleneck
vertex when no edge connect a preceding vertex to a following one.

Finally, extending properties to instances with non-unitary costs is
left to future work.
For instance, the shape could be replaced by the continuous occupation
of the DAG when scheduled on an infinite number of processors (i.e.\
the number of occupied processors at each time step).
As a result, the length would be the critical path length and the mean
shape would be the sum of all costs (called the work) divided by the
critical path length (called parallelism in\cite{tobita2002a}).

\bibliographystyle{alpha}
\bibliography{biblio-taskgraph.bib}

\appendix

\section{Exact Properties of Special DAGs}
\label{sec:exact-prop-spec}

Tables~\ref{tab:DAG-prop-prec-edge} and~\ref{tab:DAG-prop-prec-node}
synthesize the exact properties of the special DAGs presented in
Section~\ref{sec:analys-spec-dag}.

\begin{sidewaystable}
  \centering
  \begin{tabular}{m{0.08\columnwidth}ccccccccc}
    \toprule
    DAG & $m$ & $\degree{}{max}$ & $\degree{in}{max}$ & $\degree{out}{max}$
    & $\degree{}{min}$ & $\degree{}{mean}$ & $\degree{}{sd}$ & $\degree{in}{sd}$
    & $\degree{out}{sd}$\\
    \midrule
    $D_{\textnormal{empty}}$ & 0 & 0 & 0 & 0 & 0 & 0 & 0 & 0 & 0\\
    $D_{\textnormal{complete}}$ & $\frac{n(n-1)}{2}$
    & $n-1$ & $n-1$ & $n-1$ & $n-1$ & $n-1$ & 0 & $\sqrt{\frac{n^2-1}{12}}$ & $\sqrt{\frac{n^2-1}{12}}$\\
    $D_{\textnormal{chain}}$ & $n-1$ & $2$ & $1$ & $1$ & $1$ & $2(1-\frac{1}{n})$
    & $\sqrt{\frac{2}{n}(1-\frac{2}{n})}$ & $\sqrt{\frac{1}{n}(1-\frac{1}{n})}$ & $\sqrt{\frac{1}{n}(1-\frac{1}{n})}$\\
    $D_{\textnormal{out-tree}}$ $D_{\textnormal{comb}}$ & $n-1$ & $3$
    & $1$ & $2$ & $1$ & $2(1-\frac{1}{n})$ & $\sqrt{1-\frac1{n}-\frac4{n^2}}$
    & $\sqrt{\frac{1}{n}(1-\frac{1}{n})}$ & $\frac{\sqrt{n-1}\sqrt{n+1}}{n}$\\
    $D_{\textnormal{in-tree}}$ $D_{\textnormal{comb}}^R$ & $n-1$ & $3$ & $2$
    & $1$ & $1$ & $2(1-\frac{1}{n})$ & $\sqrt{1-\frac1{n}-\frac4{n^2}}$
    & $\frac{\sqrt{n-1}\sqrt{n+1}}{n}$ & $\sqrt{\frac{1}{n}(1-\frac{1}{n})}$\\
    $D_{\textnormal{bipartite}}$ & $\frac{n^2}{4}$ & $\frac{n}{2}$ & $\frac{n}{2}$
    & $\frac{n}{2}$ & $\frac{n}{2}$ & $\frac{n}{2}$ & 0 & $\frac{n}{4}$ & $\frac{n}{4}$\\
    $D_{\textnormal{square}}$ & $n(\sqrt{n}-1)$ & $2\sqrt{n}$ & $\sqrt{n}$
    & $\sqrt{n}$ & $\sqrt{n}$ & $2(\sqrt{n}-1)$ & $\sqrt{2\sqrt{n}-4}$
    & $\sqrt{\sqrt{n}-1}$ & $\sqrt{\sqrt{n}-1}$\\
    $D_{\textnormal{triangular}}$ & $\frac{k(k+1)(k-1)}{3}$ & $2(k-1)$ & $k-1$
    & $k$ & 2 & $\frac{4}{3}(k-1)$ & $\frac{\sqrt{2}}{3}(k-1)$
    & $\frac{\sqrt{(k-1)(k+2)}}{3\sqrt{2}}$ & $\frac{\sqrt{(k-1)(k+14)}}{3\sqrt{2}}$\\
    \bottomrule
  \end{tabular}
  \caption{\label{tab:DAG-prop-prec-edge}Edge-related properties of
    special DAGs.
    For $D_{\textnormal{triangular}}$, $k$ is the length of the DAG
    ($k\approx\sqrt{2n}-\frac{1}{2}$ because $n=\frac{k(k+1)}{2}$).}
\end{sidewaystable}

\begin{sidewaystable}
  \centering
  \begin{tabular}{m{0.08\columnwidth}ccccccccc}
    \toprule
    DAG & $\len$ & $\width$ & $\shape{max}$
    & $\shape{min}$ & $\shape{mean}$ & $\shape{sd}$ & $\shape{1}$ & $\shape{k}$ & $\mass$\\
    \midrule
    $D_{\textnormal{empty}}$ & 1 & $n$ & $n$ & $n$ & $n$ & 0 & $n$ & $n$ & 1\\
    $D_{\textnormal{complete}}$ $D_{\textnormal{chain}}$ & $n$ & 1 & 1 & 1 & 1 & 0 & 1 & 1 & 0\\
    $D_{\textnormal{out-tree}}$ & $\log_2(n+1)$ & $\frac{n+1}{2}$ & $\frac{n+1}{2}$ & 1 & $\frac{n}{\log_2(n+1)}$ & $\sqrt{\frac{n}{\log_2(n+1)}\left(\frac{n+2}{3}-\frac{n}{\log_2(n+1)}\right)}$ & 1 & $\frac{n+1}{2}$ & $1-\frac1{n}$\\
    $D_{\textnormal{in-tree}}$ & $\log_2(n+1)$ & $\frac{n+1}{2}$ & $\frac{n+1}{2}$ & 1 & $\frac{n}{\log_2(n+1)}$ & $\sqrt{\frac{n}{\log_2(n+1)}\left(\frac{n+2}{3}-\frac{n}{\log_2(n+1)}\right)}$ & $\frac{n+1}{2}$ & 1 & $1-\frac1{n}$\\
    $D_{\textnormal{comb}}$ & $\frac{n+1}{2}$ & $\frac{n+1}{2}$ & 2 & 1 & $2(1-\frac{1}{n+1})$ & $\sqrt{\frac{2}{n+1}(1-\frac{2}{n+1})}$ & 1 & 2 & $1-\frac1{n}$\\
    $D_{\textnormal{comb}}^R$ & $\frac{n+1}{2}$ & $\frac{n+1}{2}$ & $\frac{n+1}{2}$ & 1 & $2(1-\frac{1}{n+1})$& $\frac{n-1}{n+1}\sqrt{\frac{n-1}{2}}$ & $\frac{n+1}{2}$ & 1 & $\frac1{2}+{1}{2n}$\\
    $D_{\textnormal{bipartite}}$ & 2 & $\frac{n}{2}$ & $\frac{n}{2}$ & $\frac{n}{2}$ & $\frac{n}{2}$ & 0 & $\frac{n}{2}$ & $\frac{n}{2}$ & 1\\
    $D_{\textnormal{square}}$ & $\sqrt{n}$ & $\sqrt{n}$ & $\sqrt{n}$ & $\sqrt{n}$ & $\sqrt{n}$ & 0 & $\sqrt{n}$ & $\sqrt{n}$ & 1\\
    $D_{\textnormal{triangular}}$ & $k$ & $k$ & $k$ & 1 & $\frac{k+1}{2}$ & $\sqrt{\frac{k^2-1}{12}}$ & 1 & $k$ & $1-\frac1{n}$\\
    \bottomrule
  \end{tabular}
  \caption{\label{tab:DAG-prop-prec-node}Vertex-related properties of
    special DAGs.
    For $D_{\textnormal{triangular}}$, $k$ is the length of the DAG
    ($k\approx\sqrt{2n}-\frac{1}{2}$ because $n=\frac{k(k+1)}{2}$).}
\end{sidewaystable}

%%%%%%%%%%%%%%%%%%%%%%%%%%%%%%%%%%%%%%%%%%%%%%%%%%%%%%%%%%%%%%%%%%%%
\section{Probabilistic Properties of Random Triangular Matrices}
\label{sec:annexe:probaER}

We investigate in this section some probabilistic results on DAG
generated by the \erdos approach.

\begin{proposition}\label{prop:ERmean}
  Let $D$ be a DAG with $n$ vertices randomly generated by the \erdos
  algorithm with parameter $p$.
  Denoting by $(X_1,\ldots,X_k)$ its shape decomposition, one has, for
  each $1\leq i\leq k$, $\Expect(|X_i|)\leq\frac{1}{p}$.
\end{proposition}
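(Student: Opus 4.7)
My plan is to treat $i=1$ directly and handle $i\ge 2$ by identifying $X_i$ with the set of sources of a random subDAG and using a conditional-independence computation.

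For $i=1$, a vertex $v$ is a source iff all $v-1$ of its potential in-edges are absent, of probability $(1-p)^{v-1}$. Summing yields
\[
\Expect(|X_1|) \;=\; \sum_{v=1}^n (1-p)^{v-1} \;=\; \frac{1-(1-p)^n}{p} \;\le\; \frac{1}{p}.
\]

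For $i\ge 2$, let $H=\{v:\textnormal{depth}(v)\ge i\}$ and $L=V\setminus H$. I first observe that $X_i$ equals the set of sources of the subDAG induced on $H$: if $v\in H$ has a predecessor $u\in H$, then $\textnormal{depth}(u)\ge i$ forces $\textnormal{depth}(v)\ge i+1$ and $v\notin X_i$; conversely, if all predecessors of $v\in H$ lie in $L$, all of depth $\le i-1$, then $\textnormal{depth}(v)\le i$, so combined with $v\in H$ we obtain $\textnormal{depth}(v)=i$.

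I would then condition on the triple $(L,\text{ edges inside }L,\text{ edges from }L\text{ to }H)$. This conditioning fixes the subset $A\subseteq H$ of vertices admitting a predecessor in $L$ at depth $i-1$, together with $B=H\setminus A$. The delicate structural step, and in my view the main obstacle of the argument, is to check that conditionally on this triple, the edges inside $H$ remain i.i.d.\ Bernoulli$(p)$ subject only to the single constraint $\mathcal{C}$ that every $v\in B$ must have at least one in-$H$-predecessor (this being exactly the depth requirement that places $v$ in $H$, since $v\in B$ has no $L$-predecessor of depth $i-1$). Ordering $H=\{h_1<h_2<\cdots<h_{|H|}\}$, the source event for $h_j$, namely $(h_k,h_j)\notin E$ for every $k<j$, depends only on edges with second endpoint $h_j$, while $\mathcal{C}$ depends only on edges with second endpoint in $B$. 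For $h_j\in A$ these two edge sets are disjoint, the source event is independent of $\mathcal{C}$, and its conditional probability equals $(1-p)^{j-1}$; for $h_j\in B$ the constraint $\mathcal{C}$ forbids $h_j$ from being a source. Thus
\[
\Expect(|X_i|\mid L,\text{ outside edges}) \;=\; \sum_{h_j\in A}(1-p)^{j-1} \;\le\; \sum_{j=1}^{|H|}(1-p)^{j-1} \;\le\; \frac{1}{p},
\]
and taking the total expectation closes the case $i\ge 2$.

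Once the conditional distribution of the $H$-edges is pinned down as an i.i.d.\ Bernoulli$(p)$ family conditioned on $\mathcal{C}$, the rest is immediate: the disjointness of edge supports for $A$-vertices versus those appearing in $\mathcal{C}$ gives the independence, and the remaining sum collapses to the desired geometric bound.
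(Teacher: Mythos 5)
Your proof is correct and rests on the same core estimate as the paper's: a vertex of depth $i$ must lack in-edges from every earlier vertex of depth at least $i$, and summing the resulting geometric terms $(1-p)^{j-1}$ gives the bound $\frac{1}{p}$. The paper's proof asserts this directly (conditioning only implicitly on $Y_i=\bigcup_{j<i}X_j$) without the conditional-independence justification you supply via the $A$/$B$ split and the constraint $\mathcal{C}$, so your argument is a more careful elaboration of the same route rather than a genuinely different one.
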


\begin{proof}
  Let $M_{i,j}$ be the upper triangular matrix corresponding to
  $D$. Let $Y_i=\cup_{j<i} X_j$.  If $j\in X_i$, then, for all $r <j$
  such that $r\notin Y_i$, $M_{r,j}=0$. Therefore, since the
  $M_{r,j}$ are independent Bernoulli random variables of parameter $p$,
  $\Prob(j\in X_i)\leq (1-p)^{j-|Y_i|}$.  Consequently, $\Expect(x_i)\leq
  1+(1-p)+\ldots+(1-p)^{n-|Y_i|}\leq \frac{1}{1-(1-p)}=\frac{1}{p}$.
\end{proof}

\begin{proposition}\label{prop:ERET}
  Let $D$ be a DAG with $n$ vertices randomly generated by the \erdos
  Algorithm with parameter $p$. One has $\Expect(m(D^T))\leq \frac{n-1}{p}-\frac{1-p^2}{p^3}(1-(1-p^2)^{n-1})$.
\end{proposition}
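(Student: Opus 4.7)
The plan is to apply linearity of expectation together with a per-edge upper bound on the probability $\Prob((i,j)\in D^T)$ for every pair $i<j$. The key observation: if $(i,j)$ lies in the transitive reduction, then in particular (a) $(i,j)\in D$ and (b) there is no length-$2$ path from $i$ to $j$ through any intermediate vertex $k$ with $i<k<j$ (else the edge would be redundant and removed by transitive reduction). This gives a bound that ignores longer paths but will turn out to match the stated expression exactly.

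First I would derive the bound $\Prob((i,j)\in D^T)\leq p(1-p^2)^{j-i-1}$. The event $\{(i,j)\in D\}$ involves the single Bernoulli$(p)$ variable $M_{i,j}$. For each $k$ with $i<k<j$, the event $\{(i,k)\in D \text{ and } (k,j)\in D\}$ has probability $p^2$, and as $k$ varies, these events involve pairwise disjoint pairs of edges, all distinct from $(i,j)$. Hence by independence of the Bernoulli entries $M_{r,s}$ in the upper-triangular adjacency matrix, the probability that $(i,j)\in D$ and simultaneously no $k\in\{i+1,\ldots,j-1\}$ yields a length-$2$ path is exactly $p\prod_{k=i+1}^{j-1}(1-p^2)=p(1-p^2)^{j-i-1}$, which upper-bounds $\Prob((i,j)\in D^T)$.

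Next I would sum these bounds over $1\leq i<j\leq n$ by grouping by $j$. For fixed $j$, as $i$ ranges over $\{1,\ldots,j-1\}$ the exponent $j-i-1$ ranges over $\{0,\ldots,j-2\}$, so the inner sum telescopes as a geometric series:
\[
\sum_{i=1}^{j-1} p(1-p^2)^{j-i-1}=p\cdot\frac{1-(1-p^2)^{j-1}}{1-(1-p^2)}=\frac{1-(1-p^2)^{j-1}}{p}.
\]
Summing over $j=2,\ldots,n$ gives $\frac{n-1}{p}-\frac{1}{p}\sum_{k=1}^{n-1}(1-p^2)^k$, and a final application of the geometric sum $\sum_{k=1}^{n-1}(1-p^2)^k=\frac{(1-p^2)(1-(1-p^2)^{n-1})}{p^2}$ produces exactly $\frac{n-1}{p}-\frac{1-p^2}{p^3}(1-(1-p^2)^{n-1})$.

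There is no substantial obstacle in this proof. The only subtle point is verifying the independence used in step one: across different intermediate vertices $k$, the pairs of edges $\{(i,k),(k,j)\}$ are disjoint both from each other and from $(i,j)$, which is what legitimizes multiplying probabilities to get the factor $(1-p^2)^{j-i-1}$. Once that is stated carefully, the rest is a mechanical computation of a geometric sum.
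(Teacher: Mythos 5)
Your proof is correct and follows essentially the same route as the paper: the paper likewise bounds $m(D^T)$ by the number of pairs $(i,j)$ with $(i,j)\in E$ and no length-two path $i\to r\to j$, computes $\Prob((i,j)\in A)=p(1-p^2)^{j-i-1}$ by independence of the upper-triangular Bernoulli entries, and sums the geometric series to obtain the stated bound. Your explicit remark about the disjointness of the edge pairs across intermediate vertices $k$ is exactly the independence the paper uses implicitly.
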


\begin{proof}
Let $M_{i,j}$ be the upper triangular matrix corresponding to $D$.
$$A=\{(i,j)\mid 1\leq i <j \leq n,\ (i,j)\in E\text{ s.t. } \forall
i<r<j,\ (i,r)\notin E \text{ or }(r,j)\notin E\},$$ where $E$ is the set of
edges of $D$. By definition of $D^T$, if $(i,j)\in D^T$, then
$(i,j)\in A$. Consequently,
\begin{equation}\label{eq:prop:ERET1}
  |D^T|\leq |A|.
  \end{equation}

Moreover, for every $i<j$, $\Prob((i,j)\in A)=\Prob(M_{i,j}=1\text{
  and } \forall i<r<j,\ M_{i,r}=0\text{ or } M_{r,j}=0)$.
Since the $M_{i,j}$ are independent Bernoulli random variables,
\begin{align*}
  \Prob((i,j)\in A)&=\Prob(M_{i,j}=1)\Pi_{r=i+1}^{j-1}\Prob( M_{i,r}=0\text{ or } M_{r,j}=0)\\
  &=p \Pi_{r=i+1}^{j-1}(1- \Prob( M_{i,r}=1\text{ and } M_{r,j}=1))\\
  &=p\Pi_{r=i+1}^{j-1}(1-p^2)=p(1-p^2)^{j-i-1}.
\end{align*}

Let $A_{i,j}$ be the Bernoulli random variable encoding that $(i,j)\in A$.
One has $|A|=\sum_{i<j} A_{i,j}$. Consequently,
\begin{align*}
  \Expect(|A|)&=\sum_{j=2}^n\sum_{i=1}^{j-1} \Expect(A_{i,j})=\sum_{j=2}^n\sum_{i=1}^{j-1}p(1-p^2)^{j-i-1}\\
  &=\sum_{j=2}^n\sum_{r=0}^{j-2}p(1-p^2)^{r}\quad \text{with }r=j-i-1\\
  &=p\sum_{j=2}^n\frac{1-(1-p^2)^{j-1}}{1-(1-p^2)}\\
  &=\frac{1}{p}\sum_{j=2}^n(1-(1-p^2)^{j-1})\\
  &=\frac{n-1}{p}-\frac{1}{p}\sum_{j=2}^n(1-p^2)^{j-1}\\
  &=\frac{n-1}{p}-\frac{1-p^2}{p}\sum_{j=0}^{n-2}(1-p^2)^{j}\\
  &=\frac{n-1}{p}-\frac{1-p^2}{p}\frac{1-(1-p^2)^{n-1}}{1-(1-p^2)}\\
  &=\frac{n-1}{p}-\frac{1-p^2}{p^3}(1-(1-p^2)^{n-1}).
\end{align*}

One can conclude using Equation~(\ref{eq:prop:ERET1}).
\end{proof}

%%%%%%%%%%%%%%%%%%%%%%%%%%%%%%%%%%%%%%%%%%%%%%%%%%%%%%%%%%%%%%%%%%%%
\section{Probabilistic Properties of Random Uniform DAGs}
\label{sec:annexe:proba}

We are interested in this section in the probabilistic properties of
the shape of a DAG $D$ randomly generated with the uniform distribution
as exposed in Section~\ref{sec:recursive}. In this context, we
consider a random shape $(x_1,\ldots,x_k)$ generated by
Algorithm~\ref{algo:shaperec} (for a DAG with $n$ vertices).  Note that
the length $k$ of the shape is a random variable and that the $x_i$'s
are dependent random variables. However, the distribution of $x_i$
only depends on the sum of the $x_j$'s, with $j<i$ (formally,
$\Prob(x_i=r\mid x_1,\ldots,x_i-1)=\Prob(x_i=r\mid
s_i)=\frac{a_{n-s_i,k}}{a_{n-s_i}}$). Let $s_1=0$ and for $i\geq 1$,
$s_i=\sum_{j<i}x_j$. The following result is proved
in\cite[Proposition~3]{liskovetsmaixmal}: if $n-s_i\geq 2$ and $r\geq
2$,
\begin{equation}\label{eq:liskovets}
  \Prob(x_i\leq r\mid s_i)\geq
  1-\frac{n-s_i-r}{n-s_i+1}\frac{2}{(r+1)!\ 2^{\frac{r(r-1)}{2}}}.
\end{equation}

It follows from Equation~(\ref{eq:liskovets}), for $r\geq 2$ and if
$n-s_i\geq 2$,
\begin{equation*}
  \Prob(x_i> r\mid s_i)\leq
  \frac{n-s_i-r}{n-s_i+1}\frac{2}{(r+1)!\ 2^{\frac{r(r-1)}{2}}}\leq
  \frac{2}{(r+1)!\ 2^{\frac{r(r-1)}{2}}}.
\end{equation*}

The above equation still holds if $n-s_i< 2$ since
the probability is null.

These upper bounds show that the probability of having large values in
the shape is very small.
For instance, the probability that $x_i\geq 9$ is less than
$10^{-11}$.

Moreover, since for $r >2$, $(r+1)!\geq
2^{r+1}$, one has for every $r>2$,

\begin{equation}\label{eq:preuve1}
  \Prob(x_i> r\mid s_i)\leq
  2^{-\frac{r(r+1)}{2}}.
\end{equation}

The following lemma will be useful.

\begin{lemma}\label{lemma:max}
  One has, for every $r>2$, $n\geq 2$,
  $\Prob(\max(x_i)>r)\leq n2^{-\frac{r(r+1)}{2}}$.
\end{lemma}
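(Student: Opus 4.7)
The plan is to apply a union bound over the indices $i$ together with the conditional bound of Equation~(\ref{eq:preuve1}), which states that $\Prob(x_i>r\mid s_i)\leq 2^{-r(r+1)/2}$ for any $r>2$.

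First, I would observe that although the length $k$ of the shape is itself a random variable, it is always bounded by $n$: since each $x_i\geq 1$ and the $x_i$ sum to $n$, the shape decomposition has at most $n$ parts. Extending the sequence by convention (say $x_i=0$ for $i>k$) so that the event $\{x_i>r\}$ makes sense for every $1\leq i\leq n$, one can write
\[
\Prob(\max_i x_i>r)=\Prob\Bigl(\bigcup_{i=1}^n \{x_i>r\}\Bigr)\leq \sum_{i=1}^n \Prob(x_i>r).
\]

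Next, I would remove the conditioning. By the tower property,
\[
\Prob(x_i>r)=\Expect\bigl[\Prob(x_i>r\mid s_i)\bigr]\leq 2^{-r(r+1)/2},
\]
using Equation~(\ref{eq:preuve1}) (which, as noted in the text, also holds trivially when $n-s_i<2$ because the probability is zero). Summing the $n$ identical terms yields the stated bound $n\,2^{-r(r+1)/2}$.

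The argument is essentially a routine union bound; the only mild subtlety is handling the fact that the number of shape components $k$ is random, which is taken care of by the crude deterministic bound $k\leq n$. No significant obstacle is expected.
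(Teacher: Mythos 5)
Your proof is correct and is essentially the argument the paper gives: both rest on the pointwise conditional bound $\Prob(x_i>r\mid s_i)\leq 2^{-r(r+1)/2}$ from Equation~(\ref{eq:preuve1}) and on crudely bounding the random length $k$ by $n$. The paper merely phrases the union bound multiplicatively, writing $\Prob(\max x_i\leq r)$ as a product of conditional probabilities and invoking Bernoulli's inequality $(1-q)^n\geq 1-nq$, which yields the identical estimate $n2^{-r(r+1)/2}$.
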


\begin{proof}
  Let $A_{i,r}$ denotes the event \emph{$x_i\leq r$}. One has
  \begin{align*}
    &\Prob(\max(x_i)\leq r)=\Prob(\bigcap_{1\leq i \leq k} A_{i,r})\\
    &=\Prob(A_{1,r})\Prob(A_{2,r}\mid A_{1,r})\ldots
      \Prob(A_{i,r}\mid A_{1,r},A_{2,r},\ldots,A_{i-1,r})\ldots
      \Prob(A_{k,r}\mid A_{1,r},A_{2,r},\ldots,A_{k-1,r})
  \end{align*}

  Using Equation~(\ref{eq:preuve1}) and Bernoulli's inequality, it follows that
  $\Prob(\max(x_i)\leq r)\geq (1-2^{-\frac{r(r+1)}{2}})^k\geq
  (1-2^{-\frac{r(r+1)}{2}})^n\geq 1-n2^{-\frac{r(r+1)}{2}}.$
  Now, $\Prob(\max(x_i)>r)=1-\Prob(\max(x_i)\leq r)\leq
  n2^{-\frac{r(r+1)}{2}}$.
\end{proof}

We can now claim an upper bound for the expected value of
$\shape{max}(D)=\max(x_i)$.

\begin{proposition}\label{prop:max}
  One has $\Expect(\max(x_i))=O(\log n)$.
\end{proposition}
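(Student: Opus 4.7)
The plan is to bound the expectation via the classical identity $\Expect(\max(x_i)) = \sum_{r \geq 0} \Prob(\max(x_i) > r)$, which holds for any non-negative integer-valued random variable. This reduces the question to summing the tail bound already established in Lemma~\ref{lemma:max}.

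First I would fix a threshold $R = R(n)$ and split the series into a head $\sum_{0 \leq r < R}$ and a tail $\sum_{r \geq R}$. The head is crudely upper-bounded by $R$, since every probability is at most $1$. The tail is controlled by Lemma~\ref{lemma:max}: it is at most $\sum_{r \geq R} n \cdot 2^{-r(r+1)/2}$. Because $r(r+1)/2$ grows super-linearly in $r$, this tail series is dominated by a geometric multiple of its first term, so it suffices to pick $R$ just large enough that $n \cdot 2^{-R(R+1)/2} = O(1)$.

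The natural choice is $R = \lceil \log_2 n \rceil$, for which $R(R+1)/2$ eventually dwarfs $\log_2 n$, so $n \cdot 2^{-R(R+1)/2} = o(1)$ and the tail is $O(1)$. Combining the two pieces yields $\Expect(\max(x_i)) \leq R + O(1) = O(\log n)$, as claimed. Incidentally, the same argument gives the stronger bound $O(\sqrt{\log n})$ by instead taking $R = \lceil \sqrt{2\log_2 n}\rceil$; the proposition asserts only the weaker form, which keeps the constants trivial.

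The only obstacle is pure bookkeeping: Lemma~\ref{lemma:max} is stated for $r > 2$, so the three terms $r \in \{0,1,2\}$ must be absorbed into the head (harmless, as they add at most $3$ to the total), and one must observe that $\max(x_i)$ is indeed a non-negative integer so the tail-sum identity applies. No finer probabilistic input is needed, since the super-exponential decay $2^{-r(r+1)/2}$ in Lemma~\ref{lemma:max} dominates the linear factor $n$ as soon as $r$ slightly exceeds $\sqrt{2\log_2 n}$.
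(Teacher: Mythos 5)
Your proof is correct and follows essentially the same route as the paper's: truncate at a threshold and let the super-exponential tail bound of Lemma~\ref{lemma:max} swallow the factor $n$. The only substantive difference is that your tail-sum identity $\Expect(\max(x_i))=\sum_{r\geq 0}\Prob(\max(x_i)>r)$ charges the head only $R$ rather than the $h^2$ appearing in the paper's estimate $\sum_{r\leq h} h\,\Prob(\max(x_i)=r)\leq h^2$, which is exactly why you legitimately observe the sharper bound $O(\sqrt{\log n})$; the paper's choice $h=\left\lfloor\sqrt{6\log_2 n}\right\rfloor+1$ is tuned so that $h^2=O(\log n)$ suffices for the stated claim.
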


\begin{proof}
  Let $h=\left\lfloor \sqrt{6\log_2 n}\right\rfloor+1$.
  \begin{align*}
    \Expect(\max(x_i))&=\sum_{r=1}^n \Prob(\max(x_i)=r).r\\
                         &=\sum _{r=1}^hr\Prob(\max(x_i)=r) \quad+\quad
                           \sum _{r=h+1}^nr\Prob(\max(x_i)=r)\\
                         &\leq \sum _{r=1}^hh\Prob(\max(x_i)=r)+
                           \sum _{r=h+1}^n r \Prob(\max(x_i)=r)\\
    &\leq h^2+\sum _{r=h+1}^n r \Prob(\max(x_i)\geq r)\\
    &\leq h^2+\sum _{r=h+1}^n r \Prob(\max(x_i> r-1)\\
    &\leq h^2+\sum _{r=h+1}^n r n 2^{-\frac{r(r-1)}{2}}\\
    &\leq h^2+n 2^{-\frac{h(h+1)}{2}}\sum _{r=h+1}^n r\\
    &\leq 6\log_2 n +n^3 2^{-\frac{h(h+1)}{2}}
  \end{align*}

  Since $r>3$ for $n\ge2$, we can apply Lemma~\ref{lemma:max} to
  eliminate the probability in the second term.
  Note that $2^{-\frac{h(h+1)}{2}}\leq 2^{-\frac{h^2}{2}}\leq
  2^{-3\log_2 n}$.
  Since $n^3 2^{-3\log_2 n}=1$, we have
  $\Expect(\max(x_i))\leq  6\log_2 n +1$, proving the result.
\end{proof}

It is proved in\cite{liskovetsmaixmal} that $\Expect(x_1)$ converges to a
constant (approximately 1.488) when $n$ grows to infinity. One can
easily obtain a bound for each level and each $n$.

\begin{proposition}\label{prop:expectshape}
  One has, for every $n\geq 2$, every $1\leq i\leq k$,
  $\Expect(x_i)\leq 2+\frac1{4}$.
\end{proposition}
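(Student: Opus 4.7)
The plan is to bound $\Expect(x_i)$ by writing the expectation of the nonnegative integer random variable $x_i$ as a sum of tail probabilities and invoking the tail estimate already displayed in the excerpt (the one labelled as Equation~(\ref{eq:preuve1}) combined with the discussion immediately above it). The key observation is that the bound
\[
\Prob(x_i > r \mid s_i) \leq \frac{2}{(r+1)!\,2^{r(r-1)/2}}
\]
holds for every $r \geq 2$ and every value of $s_i$ (including the degenerate case $n-s_i<2$ where the probability is zero), so it survives the marginalisation over $s_i$.

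First, I would write, using that $x_i \geq 1$ almost surely,
\[
\Expect(x_i \mid s_i) = \sum_{r=0}^{\infty} \Prob(x_i > r \mid s_i)
= 1 + \Prob(x_i > 1 \mid s_i) + \sum_{r \geq 2} \Prob(x_i > r \mid s_i).
\]
For $r=0$ and $r=1$ I would use only the trivial bound $\Prob(x_i > r \mid s_i) \leq 1$, accounting for a contribution of at most $2$ to the sum. For the remaining terms, I would substitute the tail estimate cited above, obtaining
\[
\Expect(x_i \mid s_i) \leq 2 + \sum_{r \geq 2} \frac{2}{(r+1)!\,2^{r(r-1)/2}}.
\]

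Next, I would check that the remaining series is at most $\tfrac{1}{4}$. This is a routine numerical verification: the first term is $\tfrac{1}{6}$, the second is $\tfrac{1}{96}$, the third is $\tfrac{1}{3840}$, and subsequent terms are dominated by a rapidly converging super-geometric tail because of both the factorial $(r+1)!$ and the Gaussian-like factor $2^{r(r-1)/2}$. A crude majorisation (for instance, bounding the full tail from $r=3$ onwards by a geometric series with ratio $1/8$) gives a total well below $\tfrac{1}{4}$, so the conditional bound $\Expect(x_i \mid s_i) \leq 2 + \tfrac{1}{4}$ holds for every value of $s_i$.

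Finally, I would take an outer expectation over $s_i$ via the tower property, $\Expect(x_i) = \Expect(\Expect(x_i \mid s_i))$, which yields the announced inequality. There is no real obstacle here: the entire argument reduces to combining the tail bound prepared in Appendix~\ref{sec:annexe:proba} with the identity $\Expect(x_i) = \sum_{r\geq 0}\Prob(x_i>r)$. The only care to be taken is to treat the cases $r=0$ and $r=1$ separately, since the tail estimate is stated only for $r \geq 2$, and to verify the numerical inequality that the residual series does not exceed $\tfrac{1}{4}$.
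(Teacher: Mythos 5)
Your proof is correct, and it reaches the stated bound by a slightly different — and in fact tighter — route than the paper. Both arguments share the same core idea: the values $r\le 2$ contribute at most $2$ to the expectation, and the super-exponentially decaying tail contributes at most $\frac14$. The paper expands $\Expect(x_i)=\sum_r r\,\Prob(x_i=r)$, bounds $\Prob(x_i=1)+2\Prob(x_i=2)\le 2$, and then substitutes $\Prob(x_i=r)\le 2^{-r(r+1)/2}$ for $r\ge 3$ to get a residual sum $\sum_{r\ge3}r\,2^{-r(r+1)/2}\le\sum_{r\ge3}2^{-r}\le\frac14$. That substitution is the weak point of the paper's version: Equation~(\ref{eq:preuve1}) bounds $\Prob(x_i>r\mid s_i)$, not $\Prob(x_i=r)$, and the honest conversion $\Prob(x_i=r)\le\Prob(x_i>r-1)\le 2^{-(r-1)r/2}$ yields a larger residual sum that does not obviously fit under $\frac14$ (the paper also cites Lemma~\ref{lemma:max}, which concerns $\max(x_i)$ and is not the relevant estimate here). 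Your decomposition $\Expect(x_i\mid s_i)=\sum_{r\ge0}\Prob(x_i>r\mid s_i)$ sidesteps this entirely, since the tail-sum identity consumes exactly the quantity that Equation~(\ref{eq:liskovets}) controls; you also correctly use the unsimplified bound $\frac{2}{(r+1)!\,2^{r(r-1)/2}}$, which is what is actually available at $r=2$ (the simplified form~(\ref{eq:preuve1}) requires $r>2$), and your numerical check $\frac16+\frac1{96}+\frac1{3840}+\cdots<\frac14$ is right. The explicit conditioning on $s_i$ followed by the tower property is likewise more careful than the paper's implicit marginalisation. In short: same strategy, but your choice of expectation formula makes the tail estimate plug in cleanly where the paper's version needs an extra (and unstated) conversion step.
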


\begin{proof}
  \begin{align*}
    \Expect(x_i)&=\sum_{r=1}^n j \Prob(x_i=r)=
            \Prob(x_i=1)+2\Prob(x_i=2)
            +\sum_{r=3}^n r\Prob(x_i=r)\\
  \end{align*}
  Note that $\Prob(x_i=1)+\Prob(x_i=2)\le1$.
  \begin{align*}
    \Expect(x_i)&\leq 2+\sum_{r=3}^n \frac{r}{2^{r(r+1)/2}} \leq
            2+\sum_{r=3}^n \frac{1}{2^r}\frac{r}{2^{(r+1)/2}}\\
          &\leq 2+\sum_{r=3}^n \frac{1}{2^r}.
  \end{align*}
  Since $r\ge3$ for $n\ge2$, we can apply Lemma~\ref{lemma:max} to
  eliminate the probability in the last term.
  Since $\sum_{r=1}^n \frac{1}{2^r}\leq 1$, $\Expect(x_i)\leq 2+\frac1{4}$.
\end{proof}

Previous results confirm experimental results in
Section~\ref{sec:recursive} and show that the values
of the shape are all quite small. In order to evaluate the mass of a
random DAG, we will now investigate the lengths of the bloc. More
precisely, let
$$\ell_{\max}=\max \{\ell\mid \exists i\text{ s. t. }
(1-x_i)(1-x_{i+1})\ldots(1-x_{i+\ell-1})\neq 0\},$$
the maximum
length of a sequence of consecutive $x_i$ non equal to~$1$.

It is proved in\cite[page 407]{liskovetsmaixmal} that there exists a
constant $0< \alpha_0<2/3$ such that for all $n$, $\Prob(x_i=1\mid
s_i)\geq \alpha_0$ (the constant proposed in\cite{liskovetsmaixmal} is
$\frac{1}{96}$ but practical evaluation leads to claim that the
probability to have only a single vertex in a level is greater than or equal
to $1/3$).

\begin{lemma}\label{lemma:run}
  For every $r>0$,
  $\Prob(\ell_{\max}\geq \ell)\leq n(1-\alpha_0)^\ell$.
\end{lemma}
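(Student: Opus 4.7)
The plan is to bound $\Prob(\ell_{\max} \geq \ell)$ by a union bound over possible starting positions of a length-$\ell$ run of $x_i$'s different from $1$, and then control each summand via iterated conditioning.

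First, I would note that since $x_j \geq 1$ for all $j$ and $\sum_j x_j = n$, the length $k$ of the shape satisfies $k \leq n$, so there are at most $n$ possible starting positions of a run. For each starting index $i$ with $1 \leq i \leq k - \ell + 1$, let $A_i$ be the event $\{x_i \neq 1,\ x_{i+1} \neq 1,\ \ldots,\ x_{i+\ell-1} \neq 1\}$. By definition, $\{\ell_{\max} \geq \ell\} = \bigcup_i A_i$, and the union bound gives
\[
\Prob(\ell_{\max} \geq \ell) \leq \sum_{i=1}^{n} \Prob(A_i).
\]

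Second, I would bound each $\Prob(A_i)$ using the chain rule and the cited property $\Prob(x_j = 1 \mid s_j) \geq \alpha_0$. Writing
\[
\Prob(A_i) = \Prob(x_i \neq 1) \prod_{j=1}^{\ell-1} \Prob(x_{i+j} \neq 1 \mid x_i, \ldots, x_{i+j-1}),
\]
and using that conditionally on $(x_1, \ldots, x_{i+j-1})$ the value of $s_{i+j}$ is determined, each factor is at most $1-\alpha_0$. Hence $\Prob(A_i) \leq (1-\alpha_0)^\ell$. Combining with the union bound yields $\Prob(\ell_{\max} \geq \ell) \leq n(1-\alpha_0)^\ell$, as required.

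The only real subtlety, which would be the main thing to handle carefully, is that $k$ is itself a random variable, so the indices $i+j$ may or may not be ``active'' depending on the realization. I would resolve this either by (a) restricting the union to $i \leq k - \ell + 1$ and observing that $A_i$ is empty whenever $i + \ell - 1 > k$, so the sum over $i$ can be harmlessly extended to $n$, or equivalently by (b) formally coupling the generation procedure with an infinite extension in which the conditional bound $\Prob(x_j = 1 \mid s_j) \geq \alpha_0$ continues to apply. Once this bookkeeping is done, the iterated conditioning argument goes through without change.
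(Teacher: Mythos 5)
Your proposal is correct and follows essentially the same route as the paper: a union bound over at most $n$ starting positions combined with iterated conditioning via $\Prob(x_j=1\mid s_j)\geq\alpha_0$ to bound each run probability by $(1-\alpha_0)^\ell$, with the random length $k$ handled by extending the shape (the paper uses exactly your option (b), padding with $\overline{x}_i=1$ for $i>k$).
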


\begin{proof}
  One has
  \begin{align*}
    \Prob((1-x_i)&(1-x_{i+1})\ldots(1-x_{i+\ell-1})\neq 0\mid s_i)\\
                      &=\Prob((1-x_i)\neq 0\mid s_i)
                        \Prob((1-x_{i+1})\ldots(1-x_{i+\ell-1})\neq
                        0\mid s_i\text{ and } x_1\neq 1)\\
                      &\leq(1-\alpha_0)\Prob((1-x_{i+1})
                        \ldots(1-x_{i+\ell-1})\neq 0\mid s_{i+1})
  \end{align*}
  By a direct induction, one get
  $\Prob((1-x_i)(1-x_{i+1})\ldots(1-x_{i+\ell-1})\neq 0\mid
  s_i)\leq (1-\alpha_0)^\ell$.
  Now let $\overline{x}_i$ be defined, for $1\leq i\leq n$ by
  $\overline{x}_i=x_i$ if $i\leq k$ and $\overline{x}_i=1$ otherwise.
  We also denote by $\overline{s}_i$ the sum $\sum_{j=1}^{-1}
  \overline{x}_i$.
  \begin{align*}
    \Prob(\ell_{\max}\geq \ell)
    &=\Prob(\cup_{i=1}^{k-\ell+1}\{(1-x_i)(1-x_{i+1})
    \ldots(1-x_{i+\ell-1})\neq 0\mid s_i\})\\
    &\leq \Prob(\cup_{i=1}^{n-\ell+1}\{(1-\overline{x}_i)(1-\overline{x}_{i+1})
    \ldots(1-\overline{x}_{i+\ell-1})\neq 0\mid \overline{s}_i\})\\
    &\leq \sum_{i=1}^{k-n+1}\Prob((1-x_i)(1-x_{i+1})
      \ldots(1-x_{i+\ell-1})\neq 0\mid s_i)\\
    &\leq n(1-\alpha_0)^\ell,
  \end{align*}
  proving the result.
\end{proof}

One can now prove Theorem~\ref{theorem:mass}.

\begin{proof}[Proof of Theorem~\ref{theorem:mass}]
  Consider the event $A_n=\shape{max}(D)\geq \log^2(n)$ and
  $B_n=\ell_{\max}(D)\geq\log^2(n)$.  Using Markov Inequality and
  Proposition~\ref{prop:max}, $\Prob(A_n)\leq \frac{1}{\log n}$.
  Therefore, $\Prob(A_n)\to 0$ when $n\to+\infty$.

  Moreover $\Prob(B_n)\leq n(1-\alpha_0)^{\log^2n}$ by
  Lemma~\ref{lemma:run}.
  But $\log(n(1-\alpha_0)^{\log^2n})=\log n + \log(1-\alpha_0) \log^2n$.
  Since $0<1-\alpha_0<1$, $\log n + \log(1-\alpha_0) \log^2n\to -\infty$ when
  $n\to+\infty$. Consequently $n(1-\alpha_0)^{\log^2n}\to 0$ when $n\to+\infty$.

  Therefore $\Prob(A_n\cup B_n)\to 0$ when $n\to +\infty$.
\end{proof}

\end{document}